\newtheorem{proposition}{Proposition}
\newtheorem{theorem}{Theorem}
\newtheorem{lemma}{Lemma}
\providecommand{\U}[1]{\protect\rule{.1in}{.1in}}
\newsavebox{\selvestebox}
\def\erre
\def\a{\alpha}
\def\b{\beta}
\def\d{\delta}
\def\g{\gamma}
\def\l{\lambda}
\def\s{\sigma}
\def\cF{{\cal F}}
\def\cI{{\cal I}}
\def\cK{{\cal K}}
\def\cL{{\cal L}}
\def\cN{{\cal N}}
\def\cR{{\cal R}}
\def\cU{{\cal U}}
\def\acc{\`}
\def\dis{\displaystyle}
\def\dpar#1#2{\displaystyle{\partial{#2}\over {\partial {#1}}}}
\def\be{\begin{equation}}
\def\ee{\end{equation}}
\def\beq{\begin{array}}
\def\eeq{\end{array}}
\begin{document}

\title{Optimal epidemic control by social
distancing and vaccination of an infection structured by time since infection: the covid-19 case
study}
\author{Alberto d'Onofrio$^{1}$, Mimmo Iannelli$^{2}$, Piero Manfredi$^{3}$, Gabriela Marinoschi$^{4}$}
\date{}
\maketitle
\centerline{${}^1$ Dipartimento di Matematica e Geoscienze, Universit\acc a di Trieste} 
\centerline{Via Alfonso Valerio 12 Edificio H2bis, 34127 Trieste, Italy. e-mail: alberto.d'onofrio@units.it}
\centerline{${}^2$ Department of Mathematics, University of Trento} 
\centerline{Via Sommarive 14, 38123 Trento, Italy. e-mail: mimmo.iannelli@unitn.it}
\centerline{${}^3$ Dipartimento di Economia e Management, University of Pisa, Italy} 
\centerline{Via Ridolfi 10, 56124 Pisa--Italy. e-mail: manfredi@ec.unipi.it}
\centerline{${}^4$ Gheorghe Mihoc-Caius Iacob Institute of Mathematical Statistics and Applied Mathematics,}
\centerline{Romanian Academy, Bucharest, Romania. e-mail: gabriela.marinoschi@acad.ro} 

%\begin{document}

\maketitle

% REQUIRED
\begin{abstract}
Motivated by the issue of COVID-19 mitigation, in this work we tackle the general problem of optimally controlling an epidemic outbreak of a communicable disease structured by time since exposure, by the aid of two types of control instruments namely, social distancing and vaccination by a vaccine at least partly effective in protecting from infection. Effective vaccines are assumed to be made available only in a subsequent period of the epidemic so that - in the first period - epidemic control only relies on social distancing, as it happened for the COVID-19 pandemic. By our analyses, we could prove the existence of (at least) one optimal control pair, we derived first-order necessary conditions for optimality, and proved some useful properties of such optimal solutions. A worked example provides a number of further insights on the relationships between key control and epidemic parameters.
\end{abstract}

{\bf Keywords:} multi-phasic epidemics, time since infection, non-pharmaceutical interventions, social distancing, vaccination campaign, optimal control, COVID-19.\par
{\bf Mathematics Subject Classification}  34K35, 35L50, 37N40, 45D05, 49J22, 49K22, 92D25, 92D30, 93A30, 93C23

%%%%%%%%%%%%%%%%%%%%%%%%%%%%%%%%%%%%%%%%%%%%%%%%%
\section{Introduction}
%%%%%%%%%%%%%%%%%%%%%%%%%%%%%%%%%%%%%%%%%%%%%%%%%
The COVID-19 pandemic with its devastating effects on human societies, and the ensuing need for finding effective control tools during pandemic emergencies, has shown how dramatically poor still is humans' ability to control the pervasive societal impact of serious pandemic events. 
Among other, the COVID-19 pandemic has opened an endless list of new questions and problems in mathematical epidemiology, in particular in the area of optimal control theory applied to communicable diseases. With hindsight this area, which covers a wide disciplinary range, from theoretical epidemiology to engineering, mathematics and economics, was constrained to somewhat narrow boundaries before the COVID-19 pandemic. Indeed, though in the pre-COVID era many excellent works had been produced of which we can cite here here only a sample \cite{wickwire1975optimal, greenhalgh1988some, behncke2000optimal, gaff2009optimal, jung2009optimal, rowthorn2009optimal, hansen2011optimal, klepac2011synthesizing,klepac2012optimizing,klepac2016self, ainseba2012optimal, rowthorn2012optimal, betta2016perspectives, betta2019models, buonomo2019optimal, bussell2019applying}, the field suffered the implications of the prevailing \emph{reductionist} approach, mainly focusing on the containment of the \emph{direct impact and costs} of the infection on health and therefore disregarding any secondary, or indirect effects on health, the economy, and the society as a whole.\\
The COVID-19 experience has brought a revolution into this. Indeed, the nasty perspectives suddenly opened in early spring 2020 in countries that first faced the COVID-19 threat (as China and Italy) by the saturation of hospitals on the one hand and by the dramatic societal impact of persistent generalised lockdowns, have highlighted the dramatic trade-off between saving human lives from the disease vs protecting the society as a whole (and therefore human lives) from the eventual impact of the restrictions needed to contain infection spread. The list of indirect effects of the pandemic through the control measures enacted to mitigate it, is an endless one and ranges from economic effects (e.g., repercussions on employment, supply effects on production, demand effects, deflation, income decline etc), to secondary health effects (e.g., postponing treaments and prevention with consequences for chronic diseases and other non-communicable conditions, such as the consequences of isolation on mental health), to socio-educative-economic (e.g., think to the complex chain of implications due to school closure impacting on individual's education, on the overall generation of human capital, on inequality etc), to relational ones, etc.\\
The first papers attempting to account for the broader view of optimal control applied to epidemic outbreaks were from economists and we report here only the earliest of a long list \cite{alvarez2020simple,acemoglu2021optimal,eichenbaum2021macroeconomics,glover2020health}. Obviously, many other approaches were also proposed for this problem (see e.g., \cite{elie2020contact} and references therein).\\
This work aims to contribute to this broader perspective by analysing, from a theoretical standpoint, the general problem of optimally controlling an epidemic of a communicable disease structured by \emph{time since infection} through two main control instruments that is, social distancing and vaccination. However, following the realm of COVID-19 (and of any future pandemic event), social distancing (and in general non-pharmaceutical interventions) is assumed to be the only available control tool in a first period of the epidemic due to delayed arrival of the vaccine. The availability of vaccines in a subsequent period, expands the range of control tools and therefore of possible interventions, potentially allowing the gradual relaxation of social distancing and the restart of full socio-economic activity without compromising the possibility to bring the epidemic under full control, up to possible (local) elimination. 
As for the epidemic model, we consider an infectious disease structured by time since infection and characterised by waning immunity of both recovered and vaccinated individuals. Waning of immunity, both natural and vaccine related, has been a critical feature of COVID-19 even in the absence of variants \cite{milne2021does} and makes it unavoidable to adopt a susceptible-infective-removed-susceptible (SIRS) framework. Structuring the model by time since infection via a generic infectiousness kernel, confers a great deal of flexibility to the model form, for example to reflect the presence of a latent, pre-infective, phase (see \cite{d2021dynamics} and references therein) as well as other complications.

Further, the model rests on a simplifying hypothesis that we termed the 'low attack rate' hypothesis, LAR. The LAR hypothesis captures a specific set of real-world settings or specific situations/phases of the COVID-19 pandemic for which robust empirical evidence is available that the impact of the epidemic has been effectively mitigated by social distancing (and other non-pharmaceutical interventions, NPIs) in the period preceeding the arrival of vaccines, and by a combination of NPIs and vaccination after vaccine arrival. For example, in Italy a nation-wide large-scale serological survey conducted in June-July 2020, after the first wave and the generalised lockdown adopted to mitigate it were over (since May 5th 2020) showed that the national average attack rate was only slightly in excess of $2\%$. This figure was representing an average of a widely heterogeneous situation, with some areas (mainly within Lombardy region, Northern Italy) which suffered strong attack rates (with a regional average of $7.5\%$) but most other regions of the country were only very mildly attacked \cite{istat2020}. 
Outside Italy,  the LAR hypothesis closely mirrors China 'total COVID-19 control' policy. However, it also mirrors those countries (e.g., Iceland, Denmark, New Zealand, Southern Korea, Japan and Australia) that, during the first pandemic year, opted for a policy of strong control (termed,  elimination) rather than just mitigation (see \cite{oliu2021sars}). Notably, in these countries, the achievement of the best COVID-19 control performances went parallel with the lower societal damage \cite{aghion2021aiming,oliu2021sars}.

The way the LAR hypothesis will be modelled is by assuming that the susceptible fraction of the population remains close to $100\%$ prior to vaccine arrival, and that- after vaccine arrival - the susceptible fraction is mostly depleted by vaccination and only negligibly by infection. 

Clearly, during the COVID-19 pandemic, the LAR hypothesis became less and less valid as soon as variants of concern as \emph{delta or omicron}, capable to elude vaccine protection against infection, became dominant. However, we feel that the LAR hypothesis is much more than a useful simplifying hypothesis. Indeed, containment and mitigation of infection incidence at manageably low levels - what the LAR hypothesis describes in mathematical terms - is the primary form of intervention that governments and public health institutions can use to mitigate the direct and indirect (i.e., societal) impacts of serious (current and future) pandemic threats. Additionally, in our opinion, the tools to keep most pandemics under LAR boundaries are available, namely by early interventions and their boosting. Therefore, the LAR hypothesis should also represent - by itself - a fundamental policy prescription for policy makers thinking to future pandemic preparedness plans. 

The proposed model shows a \emph{multi-period} structure due to the presence of  (i) an initial period of uncontrolled epidemic growth, (ii) a subsequent period where social distancing (and in general non-pharmaceutical measures) are the only available control tool, (iii) a period where effective vaccines becomes available. In particular, the considered optimal control problem is finite-horizon where the \emph{switch times} between the various periods are taken to be (at least approximately) known to the policy maker.
After some basic characterization, the model is set into an optimal control framework, with the main goal to investigate the optimal combined pathway of social distancing and vaccination in a multi-epoch framework. The cost functional includes direct epidemiological costs, related to infection prevalence, as well as indirect costs of the epidemic. 
As main results of our analysis, we proved the existence of a joint optimal control and provided a characterization of the necessary conditions on the controls to be used in realistic applications for the effective determination of an optimal strategy.
The manuscript is organised as follows. Section \ref{epi_model} introduces the epidemiological model in its full generality and presents its multi-epoch form. Section \ref{ocp} presents the full optimal control problem, while section \ref{existenceocp} analyses the existence of the joint optimal control. Section \ref{opcond} characterises the necessary conditions for optimality. A number of remarks are discussed in \ref{remarks}. Discussion and conclusions follow.

%%%%%%%%%%%%%%%%%%%%%%%%%%%%%%%%%%%%%%%%%%%%%%%%%%%%
\section{The general epidemiological model}\label{epi_model}
%%%%%%%%%%%%%%%%%%%%%%%%%%%%%%%%%%%%%%%%%%%%%%%%%%%%
As previously pinpointed, we consider a model of a serious but non-fatal communicable disease imparting temporary immunity (both natural and vaccine-related) and structured by time since infection. The population is assumed to be constant over time and having size $N$ under outbreak conditions i.e., we disregard vital dynamics. As it has been the case for COVID-19, the infection is assumed to be controllable by two main types of interventions namely 
\begin{itemize}
\item[(i)]non-pharmaceutical interventions, as represented by social distancing, mainly
affecting the transmission rate $\beta(t)$ over time,
\item[(ii)] vaccination, according to an appropriate time-varying $V(t)$ function representing the absolute rate of effectively immunised individuals at time $t$.
\end{itemize}
Note that the modulation of interventions by time-varying functions allows to include realistic COVID-19 features such as e.g., the fact that social distancing was the only available intervention in a first phase of the epidemic, while both social distancing and vaccination coexisted in a later stage.  For sake of simplicity we disregard further interventions such as testing/tracing and isolation of infective individuals as the latter can be 
easily incorporated into the removal process.

The proposed model has an SIRS structure extending the SIR model in \cite{d2021dynamics} and it is based on a state variable, structured by \emph{time since infection} $x$
\be\label{density}
Y(x,t),\ \  x\in [0,x_+]\ , t\ge 0 ,
\ee 
representing the (improper) $x$-density (i.e., not normalised) of infected individuals at time $t$, and on the two homogeneous (i.e., unstructured)  variables
\[
S(t), \quad R(t)\qquad t\ge 0
\]
respectively representing the number of susceptible  and the number of removed individuals. 
 The internal variable $x$ of function $Y(x,t)$ namely the time since infection, represents the time elapsed since the individual has been infected i.e., the duration of sojourn of infective individuals in the infected state, conferring to the model the so-called semi-Markov form. After the seminal work by Kermack and McKendrick \cite{kermack1927contribution}, from which the present model directly stems, the time since infection variable has been customarily termed \emph{age of infection} in epidemiological modeling (e.g., \cite{diekmann1990definition,diekmann2000mathematical,thieme1993may}). Structuring the infected population by the age of infection $x$, allows a fine description of those internal processes, namely the onset and development of infectivity and its subsequent decline until removal with acquisition of immunity, which critically shape the generation of secondary cases. In particular, in \eqref{density} $x_+$ denotes the maximal time since infection and may be finite or infinite; in our general analysis we will focus on the more realistic case $x^+ < \infty$ though we will also present a worked example allowing the maximal age to be infinite in order to allow the reduction of our general problem to ordinary differential equations. 

The model equations read as
\be\label{model}
\left\{
\beq{rl}
i)&\dis S'(t) = - \frac{S(t)}{N} \int_0^{x_+} \b(t,x) Y(x,t) dx - V(t) + \d R(t)\\\\
ii)&\dis\left\{
\beq{l}
\dis \left(\dpar t{} + \dpar x{}\right) Y(x,t) = -\g(x) Y(x,t) ,\\\\
\dis Y(0,t)= \frac{S(t)}{N}  \int_0^{x_+} \b(t,x) Y(x,t) dx ,
\eeq
\right. \\\\
iii)&\dis R'(t) = \int_0^{x_+} \g(x) Y(x,t) dx  + V(t) - \d R(t)
\eeq
\right.
\ee
endowed with the initial conditions
\be\label{initial}
S(0)=S_0,\quad Y(0,x)= Y_0(x),\quad R(0)=R_0 .
\ee

The parameters in \eqref{model} have the following meaning

\begin{itemize}
\item
$\b(t , x)$ =  the transmission rate of infected individuals with time since infection $x$ at time $t$\\ which represents the expected number of secondary infections caused by a single infected with time since infection $x$ per unit of time in a fully susceptible population i.e., for $S(t)/N \approx 1$; clearly, if the susceptible fraction is depleted by the epidemics the number of secondary cases of an infective with time since infection $x$ will have to be discounted by the susceptible fraction.
\item
$\gamma(x)$  = the removal rate of infected individuals with time since infection $x$ at time $t$\\  i.e., the overall per-capita rate at which infected individuals with time since infection $x$ are removed by any cause as e.g., recovery or death (during an uncontrolled epidemic) or by screening, tracing, isolation, hospitalization etc, in the presence of interventions; for simplicity this removal rate is assumed to be independent of time as we suppose that the conditions that contribute to removal during the epidemic phases considered in this work remain essentially unhaltered over time. This does not need being true during the early epidemic phase but this is not relevant for our purposes here; note that $\g(x)$ is related to the probability to be still infected after a time $x$ since infection (the so-called \emph{survival-to-removal} function) by the equation
\[
\Gamma (x) = e^{-\int_0^x \g(s) ds};
\]
\item
$V(t)$ = the absolute instantaneous rate of successful immunization per unit of time\\
i.e., coarsely speaking, the number of vaccinated susceptibles that are effectively protected against infection (and not just from the disease) including full vaccine failure. Function $V(t)$ is assumed to be identically zero prior to the arrival of the vaccine. According to the experience of some countries, which were the first that started mass vaccination campaigns against COVID-19 in view of their market power, after the arrival of the vaccine this function has shown a range of increasing shapes, reflecting logistic difficulties in an early phase, and then plateauing due to achievement of maximal capacity.
The only requirement on $V(t)$ is that it does not violate the positivity of $S(t)$.
\item
$\delta$ = the immunity waning rate \\ i.e., the rate at which removed people loose their immunity and become again susceptible; for the sake of simplicity $\delta$ is taken here to describe both natural and vaccine waning immunity.
\end{itemize}
Concerning $\b(t,x)$ we assume the following constitutive form
\[
\b(t,x)= c(t)\b_0(x),
\]
where the factor $c(t)$ denotes the number of adequate contacts per person and per unit time, while $\b_0(x)$ tunes the (average) intrinsic infectiousness of an infected individual with time since infection $x$ (i.e, the probability that an infected
individual with time since infection $x$ infects a susceptible during an adequate contact), possibly related to her/his viral load.  We deliberately did not include time dependencies in the infectiousness due to individual protections measures as e.g., wearing masks, given that we can simply assume that the latter measures scale the adequate number of contacts and are therefore embedded into $c(t)$.

A useful remark on previous hypotheses is that considering a structured model with time since infection and generic dependent infectivity/recovery kernels, confers a great deal of flexibility to the model itself. For example, this allows to reflect the possible presence of an exposed ('E'), pre-infective, period (see \cite{donofrioiannellimanfredi2022} and references therein) as well as other modeling complications. Therefore the intimate structure of the proposed model can be considered of the SEIRS type.

The effective reproduction number of the infection at a generic time $t$ reads
\[
\cR(t) = c(t) \frac{S(t)}{N}\int_0^{x_+} \b_0(x) \Gamma(x) dx .
\]

Model \eqref{model}  is non-linear because of the first term in $i)$ and the boundary condition included in $ii)$ where the incidence of new infections per unit of time
\be\label{incidence}
U(t)= \frac{S(t)}{N} \int_0^{x_+}\!\!\!\b(t,x) Y(x,t) dx = c(t) \frac{S(t)}{N} \int_0^{x_+} \b_0(x) Y(x,t) dx
\ee
depends both on $S(t)$ and $Y(x,t)$. However, as stated in the Introduction, we will adopt an approximated version of the previous model, based on the concept of an adequately contained outbreak, what we described as the LAR hypothesis \cite{d2021dynamics}. This amounts to assume that: 
\begin{itemize}
\item[(i)] prior to the introduction of mass vaccination an early and effective social distancing activity had been undertaken bringing the current reproduction number of the infection $\cR(t)$ below one or at most in the region of one, so that the overall epidemic attack is unable to bring a serious 
depletion of the susceptible population, 
\item[(ii)] after the introduction of vaccination, the epidemic remains adequately controlled through a combination of (possibly relaxing) social distancing and immunization; obviously, in this phase vaccination becomes the key responsible for the depletion of the susceptible population, while the contribution from infection remains moderate. 
\end{itemize}
On these assumptions, the vaccination program starts when the susceptible population is still essentially nearby 100\% while, during the vaccination program progression, the removed population is essentially composed of vaccinated individuals. For COVID-19, this has surely been the case - besides China - for those countries worldwide as e.g., Iceland, Denmark, New Zealand, Southern Korea, Japan and Australia that were able to enact an early and highly effective social distancing policy \cite{oliu2021sars}.  

With these premises,  we disregard the first terms in equations (\ref{model},$i$), (\ref{model},$ii$), so that, summing the two equations, we get   $S(t) + R(t)=N$. Then, 
by denoting as $\dis s(t)=\frac{S(t)}N$ the susceptible fraction and by $\dis v(t)=\frac{V(t)}N$ the number of daily successful immunizations per head, the basic model becomes
\be\label{modelreduced}
\left\{
\beq{rl}
i)&\dis s'(t) = -v(t) + \d (1 -s(t))\\\\
ii)&\dis \left(\dpar t{} + \dpar x{}\right) Y(x,t) = -\g(x) Y(x,t),\\\\
iii)&\dis Y(0,t)=  c(t)s(t) \int_0^{x_+} \b_0(x) Y(x,t) dx ,
\eeq
\right.
\ee
where $c(t)$ and $v(t)$ play a role of control variables and are designed according to the available control strategies. The system is endowed with the pandemic initial conditions 
\be\label{pancond}
s(0)=1,\quad Y(x,0)=Y_0(x).
\ee

Note that the susceptible equation in the previous formulation is linear and has a full closed form analytical solution depending on the actual time-dependency of the vaccination rate function. Indeed, using the initial condition in \eqref{pancond}, we have
\begin{equation}\label{s}
s(t)=1 - \int_{0}^{t}e^{-\delta (t-\s)}v(\s) d\s,\quad t\in [0 , T] .
\end{equation}
Though $s(t)$ is not necessarily decreasing we have $s(t) \le 1$. Moreover, in order to guarantee $s(t)$ to be strictly positive we adopt the following constraint on $v(t)$
\be\label{deltav}
v(t) \le \delta,\quad  t\in [0 , T].
\ee
Indeed, with this constraint, from \eqref{s} we have
\[
s(t) \ge e^{-\delta t} > 0 .
\]

Problem \eqref{modelreduced} represents the state systyem for our epidemic optimal control problem that will be formulated in the following sections.  Concerning the basic parameters appearing in \eqref{modelreduced}, we will consider the following mathematical assumptions
\be\label{beta}
\b_0 (x )\ge 0, \quad  \b_0 \in  W^{1,\infty}(0,x^+) ,
\ee
\be\label{gamma}
\g (x) \ge 0, \quad  \g \in L^1_{loc}(0,x^+) ,\quad \int_0^{x^+}\!\!\! \g (\s) d\s = +\infty,\quad \g(\cdot)\Gamma(\cdot) \in L^\infty(0,x^+) ,
\ee
\be\label{Y0}
Y_0(x) \ge 0, \quad  \frac{Y_0(\cdot)}{\Gamma(\cdot)} \in  L^\infty(0,x^+) .
\ee
Note that condition  \eqref{gamma} implies $\dis\Gamma(x^+)=0$, as expected.

%%%%%%%%%%%%%%%%%%%%%%%%%%%%%%%%%%%%%%%%%%%%%%%%%%%%
\subsection{Multi-period epidemics}\label{Multiepoch}
%%%%%%%%%%%%%%%%%%%%%%%%%%%%%%%%%%%%%%%%%%%%%%%%%%%%
Given the multi-period setting of our problem motivated by COVID-19, we consider an epidemic outbreak along a fixed time range $[ -T_0 , T]$, divided into three periods as follows:
\begin{itemize}
\item[0)] a first period (actually named Period 0) of free (i.e., uncontrolled) epidemics early growth, taking place over a time interval $[-T_0,0]$, during which no control measures are enacted and the epidemics follows its natural progression; this phase is characterized by 
\[
c(t)=c_0 ,\quad v(t)=0, \quad s(t)= 1,
\]
and we have
\[
\cR(t)= c_0 \int_0^{x_+} \b_0(x) \Gamma(x) dx = \cR^0_0 ,
\]
i.e., the contact rate is unaltered at its normal level in the absence of any epidemic alert. In this period the current reproduction number $\cR(t)$  coincides with the the basic one, summarizing epidemic growth per generations in a wholly susceptible population in the absence of intervention measures. Further, following \cite{d2021dynamics}, we suppose that the initial infective cohort is distributed according to the stable distribution corresponding to the given epidemiological parameters  
\be\label{stabledistribution}
Y^0_*(x)=\frac{e^{-\a_0^*x}\Gamma(x)}{\dis \int_0^{x_+} e^{-\a_0^*x}\Gamma(x) dx}, \quad x\in [0,x_+]
\ee
where $\a_0^*$ is the leading root of the characteristic equation
\[
c_0\int_0^{x_+} \b_0(x) \Gamma(x) e^{-\l x} dx  =1 .
\]

The previous hypotheses imply that $Y_0(x)= I_0 \ Y^0_*(x)$, where $I_0$ is the initial  total number of infective individuals, and yields, for $t\in [-T_0 , 0]$,
\be\label{primafase}
\beq{c}
\dis Y(x,t) = e^{\a_0^*( t+T_0)}\  Y_0(x),\\\\
\dis U(t)= \frac{I_0\ e^{\a_0^*( t+T_0)}} { \int_0^{x_+} e^{-\a_0^*x}\Gamma(x) dx},\quad I(t)= I_0\ e^{\a_0^*( t+T_0)} .
\eeq
\ee

\item[I)] a second Period (actually Period 1) relative to the time interval $[0 , T_1]$, $T_1 <T$ when only social distancing measure are enacted to mitigate the epidemics; during this phase $c(t)$ is tuned according to the established social distancing strategy and again
 \[
v(t)=0,\ \hbox{and}\  s(t)=1;
\]

\item[II)] a third Period (Period 2) displaying through the interval $[T_1 , T]$, during which also the vaccination campaign is enacted and coexists with social distancing measures, potentially allowing for the possibility to gradually relax social distancing measures; in this case $c(t)$ and $v(t)$ depend on time according to the adopted strategy and $s(t)$ is consequently calculated through equation (\ref{modelreduced}, $i$).
\end{itemize}

 \textbf{Remark: switch times}. In this effort, the switch times between interventions, or equivalently the duration $T_0, T_1, T-T_1$ of the different  periods are taken as known. This hypothesis is clearly a departure point. The duration of the first period essentialy depends on the  time necessary to detect the epidemic in a form out-of-control and the subsequent policy delay to enact the corresponding control measures \cite{carrozzo2020deteriorated,diekmann1990definition,thieme1993may,diekmann2000mathematical}. The switch between Period 1 and Period 2 depends on the sum of the vaccine development time plus the time necessary to make available a sufficiently large stock of vaccine to adequately start the campaign. The possibility to treat duration $T_1$ as random variable was considered in some main economic efforts \cite{alvarez2020simple,acemoglu2021optimal}.  

%%%%%%%%%%%%%%%%%%%%%%%%%%%%%%%%%%%%%%%%%%%%%%%%%%%%
\section{The optimal control problem}\label{ocp}
%%%%%%%%%%%%%%%%%%%%%%%%%%%%%%%%%%%%%%%%%%%%%%%%%%%%

Within the previous multi-phasic epidemic setting, our general aim is to determine  the optimal couple $(c, v)$ minimizing the following cost function
\be\label{cost}
\beq{l} 
\dis \Phi(c,v) = A_0 \int_0^{T} c(t) s(t)\int_0^{x_+} \b_0(x) Y(x,t) dx dt  \\ \\
\dis \hskip3cm +  A_1\int_0^{T}  Q\left(c(t)\right)  dt+ \int_0^{T} \left(A_{21} v(t)+ \frac{A_{22}}{2} v^2(t) \right) dt
\eeq
\ee
in the set of controls
\be\label{setcontrol}
\cU \equiv \left[ 
(c(\cdot), v(\cdot))\ \ \hbox{such that}\ \ \left\{
\beq{ll} 
\dis c_- \le c(t) \le c_0,\ t \in [0 , T] ;\\\\
\dis 0 \le v(t) \le \delta ,\ t\in [0 , T];\\ 
\dis v(t)=0\ \hbox{for}\   t \in [0 , T_1) 
\eeq
\right.
\right] .
\ee

The first term in \eqref{cost} accounts for the direct health cost of the epidemic arising from the overall epidemic incidence. This aims to proxy the true costly direct events of the epidemic such as e.g., hospitalizations and deaths, which have not been explicitly introduced in our model just for sake of simplicity. 

The second term represents the indirect epidemic cost, namely the broader societal and economical cost due to the social distancing activities and their impact on the labour force, production, on the overall social and relational activities and also - still indirectly - on health, due to the declined prevention and medical activities caused by the restrictions. The previous formulation is a highly parsimonious one reflecting the fact that our communities 'function' because of social activities i.e., thanks to the social contacts embedded in function $c(t)$, and therefore that any restriction to such contacts, as those due e.g., to social distancing, cause a cost to the community as a whole. More explicit parametrizations of this term have been proposed in the COVID-19 economic literature \cite{alvarez2020simple,acemoglu2021optimal}, but we preferred the present one as more compact and parsimonious.
The hypothesised cost is driven by the function $Q(x)$ satisfying\be\label{Q}
Q(x)\ \hbox{is a l.s.c convex function on}\ \ (0 , c_0],\quad \lim_{x \rightarrow  0^+} Q(x)=+\infty ,\quad Q(c_0)=0 .
\ee
Indeed, the second property in  \eqref{Q} means that 
reducing to zero social contacts would destroy the economy, bringing, thus, very large costs (indeed we will keep the number of contacts above the threshold $c_- > 0$ which is considered acceptable); instead, the second condition means that keeping contacts at the level $c_0$, i.e. the typical level in the absence of epidemic restrictions, does not produce costs for the economy. Parametrized classes for $Q(x)$ are, for instance ($x\in [c_- , c_0]$, $Q_0>0$)
\[
Q(x)= Q_0  (c_0 - x)^{\rho}  \ (\rho>1) , \quad Q(x)= Q_0 \frac{c_0^{\rho} - x^{\rho}}{x^{\rho} c_0^{\rho}} \ (\rho > 0),
\]
shown in Figure \ref{figure1} for different values of the parameters. The stated convexity assumption (enhanced by the parameter $\rho$) is technically convenient for later developments but, in addition, is especially relevant on the substantive side. Indeed, as is the case of previous simple parametric families of functions, it allows to define costs components that (i) are not too sensitive when the current contact rate $x$ is close to the 'normal' contact rate $c_0$ (corresponding to the situation of absence of epidemic alert), meaning that small deviations from $c_0$ might be sustainable from the broader societal and economic point of view; (ii) generate dramatically high societal costs when large deviations of the contact rate from its normal value are considered. This is consistent with the aforementioned idea that large social distancing interventions, breaking social and relational ties, will bring dramatic costs for the society as a whole. 

\begin{figure}[h]
\includegraphics[width=6.5cm]{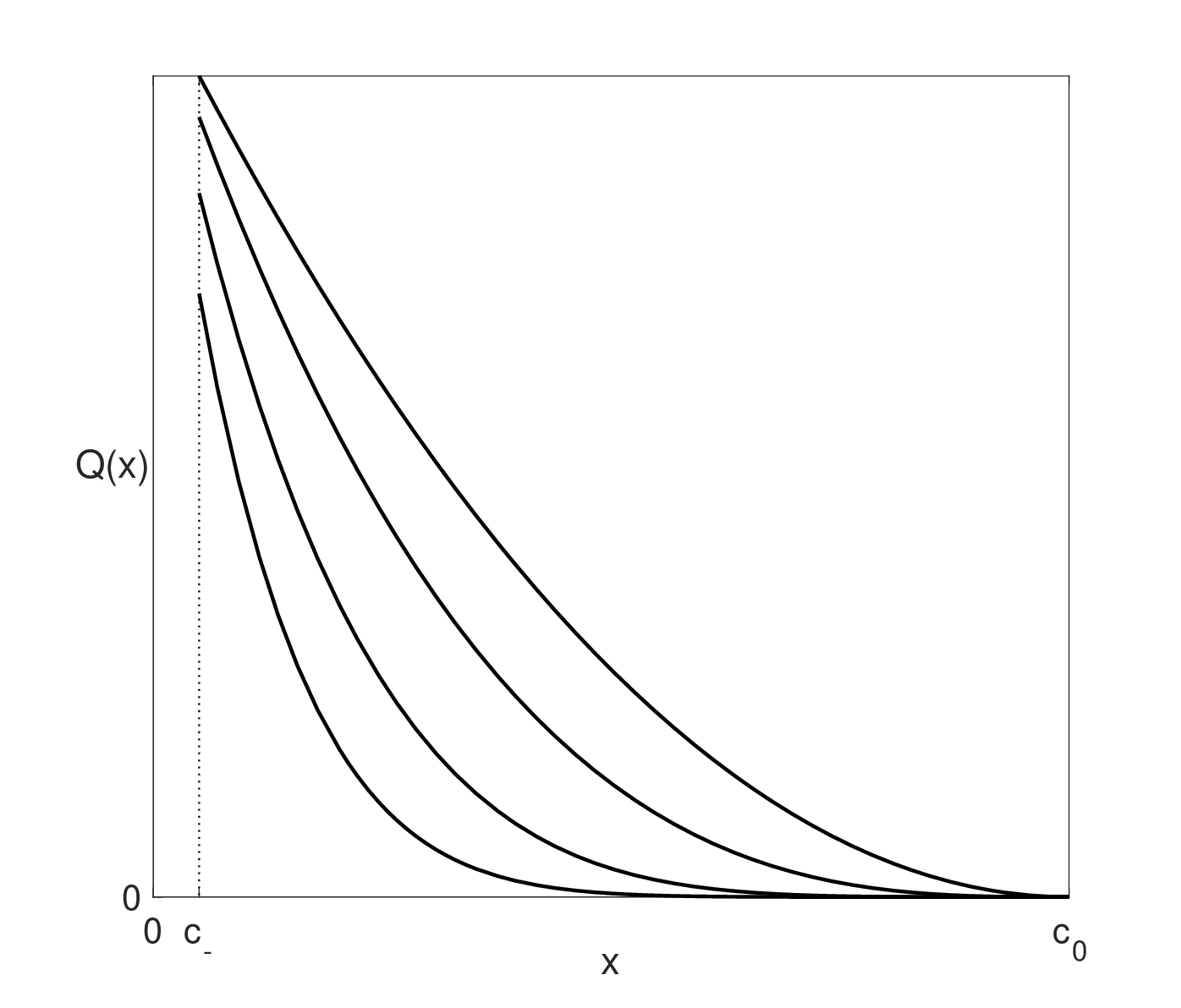}\includegraphics[width=6.5cm]{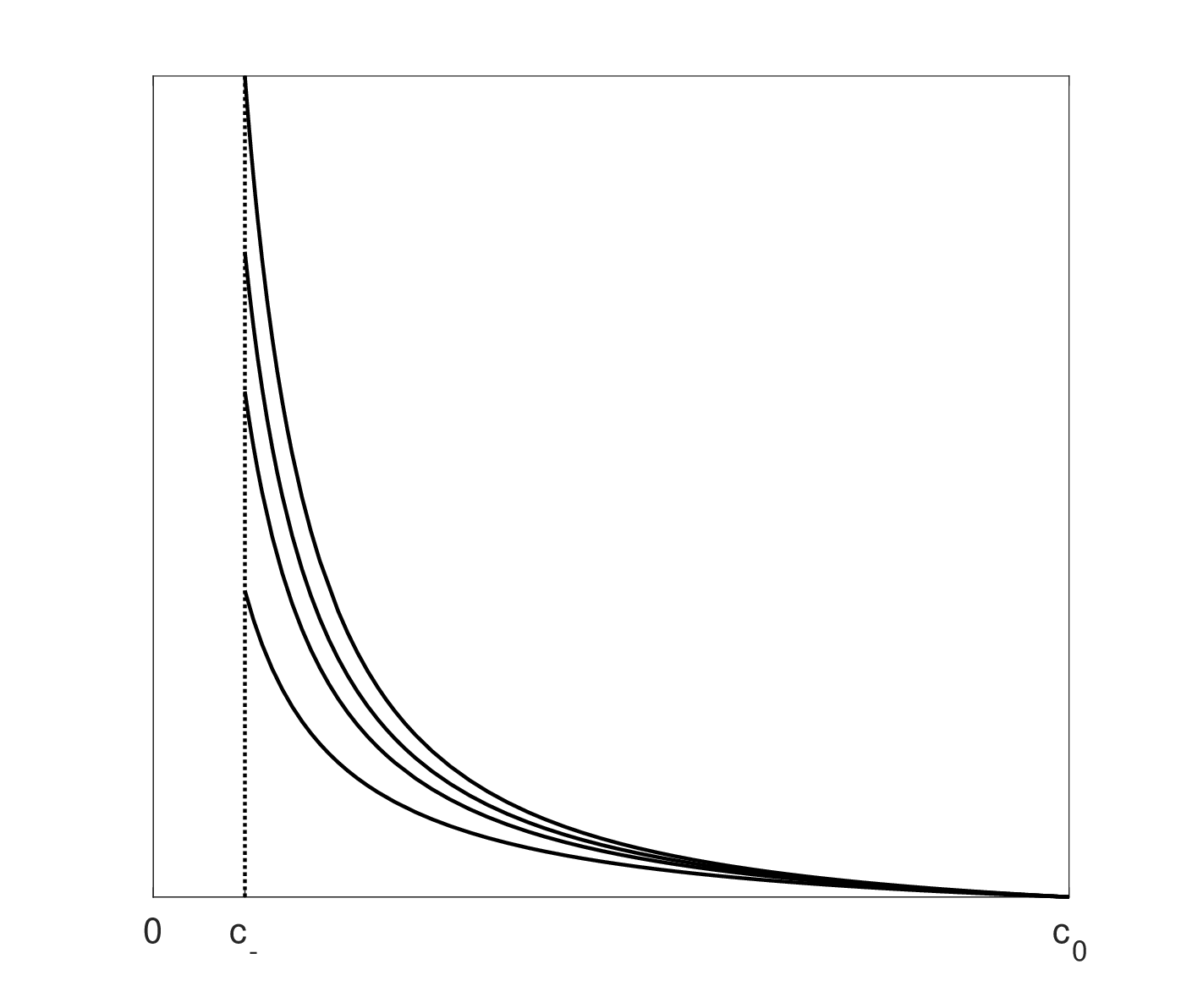}
\caption{\label{figure1}Parametrized families for function $Q(x)$ (restricted to the interval $[c_- c_0]$). Left panel: the function $Q(x)= Q_0  (c_0 - x)^\rho$ for different values of the parameter $\rho$, higher values of $\rho$ correspond to lower curves. Right panel: the function $Q(x)= Q_0 \frac{c_0^\rho - x^\rho}{x^\rho c_0^\rho}$ for different values of $\rho$, higher values of $\rho$ correspond to higher curves.}
\end{figure}

Finally, the third term in \eqref{cost} accounts for the cost of the efforts undertaken for the vaccination campaign.  The linear term, depending on coefficient $A_{21}$, corresponds to the idea that this cost is essentially proportional to the number of vaccinations administered during the campaign; the quadratic term (depending of coefficient $A_{22}$) takes into account the fact that achieving a high vaccination coverage produces additional costs for the system, primarily due to the difficulties related to reaching marginal subpopulations, for which there is a large evidence in the literature. The latter case does not simply include strongly hesitant people e.g. 'anti-vaccinators', but simply less informed groups such as e.g., foreign-born people, low-literacy groups etc \cite{ macdonald2015vaccine}. In some exceptional circumstances, a negative quadratic term might be considered e.g., (including also the vaccine price) under special forms of the vaccine demand function by countries (e.g., some countries could join in a 'cartel' to obtain a much lower price under the promise to buy larger amounts). However, we will not pursue this case in this work.  

Based on previous discussion, we assume that
\[
0 <  A_{22}  <<<  A_{21} .
\]

All the constants, $A_0$,  $A_1$, $A_{22}$, $A_{21}$, are coefficients that state the relative weights of the different cost items. 

The set $\cU$ reflects the partition into epidemic phases discussed above, and belongs to the space $L^\infty(0,T)\times L^\infty(0,T)$; actually we have that $\cU$ is weak-* closed  in $L^\infty(0,T)\times L^\infty(0,T)$ and that on $\cU$ the function $\Phi (c,v)$ is not identically infinite, in fact if we choose 
\[
c(t)=c_{0},\ t\in [0,T],\quad v(t)=0,\ t\in [0,T],
\]
we see that $\Phi (c,v)<\infty $. 

The previous problem represents a fairly general approach, to the best of our knowledge fully new, which considers the possibility to identify global optimal control strategies throughout the entire course of a multi-phasic epidemic including 
\begin{itemize}
\item[(i)] an optimal social distancing schedule in the absence of vaccination, 
\item[(ii)] an optimal (relaxation of) social distancing after the arrival of vaccination. 
\end{itemize}
This general problem includes a number of interesting subcases. These subcases are discussed later in Section \ref{remarks}.

%%%%%%%%%%%%%%%%%%%%%%%%%%%%%%%%%%%%%%%%%%%%%%%%%%%%%%%%%%
\section{Existence of an optimal control}\label{existenceocp}
%%%%%%%%%%%%%%%%%%%%%%%%%%%%%%%%%%%%%%%%%%%%%%%%%%%%%%%%%%
In this section we shall prove the existence of the solution to the state
system and the existence of a solution to the optimal control problem. Indeed, for any input $(c, v) \in \cU$,  equation $i)$ in \eqref{modelreduced} has the solution \eqref{s} independently of the other equations $ii)$ and $iii)$. Concerning these latter, they are a variation of the classic Lotka-McKendrick equation (see \cite{iannelli2017basic}) and have a solution given by
\begin{equation}
Y(x,t)=\left\{ 
\begin{tabular}{ll}
$c(t-x)s(t-x)\ Z(t-x)\ \Gamma (x),$ & $x\leq t$ \\ \\
$Y_{0}(x-t)\dis\frac{\Gamma (x)}{\Gamma (x-t)},$ & $x>t$
\end{tabular}
\right. ,  \label{Y}
\end{equation}
where
\be\label{zeta}
 Z(t)=\int_{0}^{x_{+}}\beta _{0}(x)Y(x,t)dx
\ee
solves the integral equation
\begin{equation} \label{B}
Z(t)= \int_{0}^{t} \mathcal{K}(t-x,x) Z(t-x)dx+F(t), \quad t\in [0,T]
\end{equation}
with
\[
\beq{ll}
\mathcal{K}(\s,x )=c(\s) s(\s) \beta _{0}(x) \Gamma (x)\\\\
\dis F(t)= \int_{0}^{\infty }\beta _{0}(x+t)  \Gamma (x+t)\frac{Y_{0}(x)}{{\Gamma (x)}}dx ,
\eeq
\]
where $ \beta _{0}(x)$, $ \Gamma (x)$ and $Y_0(x)$ are extended by zero outside the interval $[0,x^+]$.

Note that, under our assumptions \eqref{beta}-\eqref{Y0}, $\b_0$ and $\Gamma$ are Lipschitz continuous, thus
\[
\left| \b_0(x_1)\Gamma (x_1) - \b_0(x_2)\Gamma (x_2)\right| \le L | x_1 - x_2| ,
\]
and,  for any $(c, v) \in \cU$,  we have
\be\label{KF}
\beq{ll}
\mathcal{K}(\s,x ) \ge 0,\quad \left |\mathcal{K}(\s,x_1 )-\mathcal{K}(\s,x_2 )\right| \le  c_0\  L \ | x_1-x_2| \\\\
F(t) \ge 0,\quad  \dis \left | F(t_1) - F(t_2) \right| \le  L \left\| \frac{Y_0(\cdot)}{\Gamma(\cdot)} \right\|_1 \ |t_1 - t_2| .
\eeq
\ee

The following theorem recalls simple results from the theory of  integral equations of Volterra type, to be used in connection with \eqref{B} and with later developments
\begin{theorem}\label{propVIE}
Let 
\[
k(\s,x) \in L^\infty((0,T)\times (0,+\infty) ),\quad f(t)\in C([0,T]),
\]
then the integral equation
\be\label{VIE}
\dis u(t) = \int_0^t k(\s , t- \s ) u( \s) d\s + f(t)
\ee
has a unique solution $u \in  C([0,T])$. If moreover $k^\l$ and $f^\l$ are sequences such that
 \[
 \lim_{\l \to 0}  k^\l = k \  \ \hbox{in}\  \ \ L^\infty((0,T)\times (0,+\infty) ), \quad  \lim_{\l \to 0}  f^\l  = f \  \ \hbox{in} \ \ \  C([0,T]),
 \]
and $u^\l$ is the solution relative to $k^\l$ and $f^\l$, then
\be\label{limVIE}
 \lim_{\l \to 0} u^\l = u\   \ \hbox{in} \ \ \  C([0,T]).
 \ee
 \end{theorem}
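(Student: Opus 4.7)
The plan is standard: prove existence and uniqueness by a Banach fixed point argument on $C([0,T])$ with a Bielecki-type weighted norm, and then deduce continuous dependence from a Gronwall estimate on the difference of the two equations.

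For the first part, I would define $\mathcal{T}: C([0,T]) \to C([0,T])$ by $(\mathcal{T}u)(t) = f(t) + \int_0^t k(\sigma, t-\sigma) u(\sigma)\,d\sigma$ and endow the space with the weighted norm $\|u\|_\mu := \sup_{t\in [0,T]} e^{-\mu t}|u(t)|$. A direct computation yields
\[
|(\mathcal{T}u - \mathcal{T}v)(t)| \le \|k\|_\infty \int_0^t e^{\mu\sigma}\,d\sigma\, \|u-v\|_\mu \le \frac{\|k\|_\infty}{\mu}\, e^{\mu t}\,\|u-v\|_\mu ,
\]
so for any $\mu > \|k\|_\infty$ the operator is a strict contraction on $(C([0,T]), \|\cdot\|_\mu)$ and the fixed point theorem delivers a unique continuous solution of \eqref{VIE}. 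Before applying the contraction, however, one must check that $\mathcal{T}$ maps $C([0,T])$ into itself: decomposing $(\mathcal{T}u)(t+h)-(\mathcal{T}u)(t)$ into a boundary piece on $[t,t+h]$ (bounded by $\|k\|_\infty\|u\|_\infty|h|$) and a translation piece $\int_0^t [k(\sigma, t+h-\sigma)-k(\sigma, t-\sigma)] u(\sigma)\,d\sigma$, the latter is the subtle point since $L^\infty$ is not stable under translation. It is handled either by approximating $k$ in $L^1$ by kernels continuous in the second variable and passing to the limit via the (still-to-be-proved) stability estimate, or more concretely by invoking the Lipschitz regularity of $\beta_0(x)\Gamma(x)$ inherited from \eqref{KF}, which covers every subsequent application of this theorem in the paper.

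For the continuous-dependence statement, I would subtract the two equations:
\[
u^\lambda(t) - u(t) = [f^\lambda(t)-f(t)] + \int_0^t [k^\lambda(\sigma, t-\sigma) - k(\sigma, t-\sigma)]\,u^\lambda(\sigma)\,d\sigma + \int_0^t k(\sigma, t-\sigma)\,[u^\lambda(\sigma) - u(\sigma)]\,d\sigma.
\]
Since $\|k^\lambda\|_{L^\infty}$ and $\|f^\lambda\|_{C([0,T])}$ are eventually uniformly bounded, the Bielecki-norm contraction applied uniformly in $\lambda$ furnishes a uniform bound $\sup_\lambda \|u^\lambda\|_\infty \le C$. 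The first two terms on the right are then dominated in absolute value by $\varepsilon_\lambda := \|f^\lambda - f\|_{C([0,T])} + CT\|k^\lambda - k\|_{L^\infty}$, which tends to $0$ by hypothesis. Gronwall's lemma applied to $\varphi(t) := |u^\lambda(t)-u(t)|$, which satisfies $\varphi(t) \le \varepsilon_\lambda + \|k\|_\infty \int_0^t \varphi(\sigma)\,d\sigma$, then yields $\sup_{[0,T]} \varphi \le \varepsilon_\lambda\, e^{\|k\|_\infty T}$, i.e.\ \eqref{limVIE}.

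The only genuine obstacle is the $C([0,T])$-invariance of $\mathcal{T}$ discussed above; everything else is a textbook application of Volterra theory, and the kernels that occur later in the manuscript have enough structure ($\beta_0\Gamma$ Lipschitz, $c$ and $s$ measurable and bounded) to sidestep the translation issue directly.
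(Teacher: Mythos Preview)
Your approach is essentially the paper's: existence by a fixed-point argument (you use a Bielecki-norm contraction; the paper uses Picard iterates and cites \cite{iannelli2017basic} for the convergence details) and continuous dependence by subtracting the two equations and applying Gronwall. The paper records the bound $|u(t)-u^\lambda(t)| \le e^{\|k\|_\infty T}\|f-f^\lambda\|_\infty + T e^{\|k^\lambda\|_\infty T}\|k-k^\lambda\|_\infty$, obtained by exactly the decomposition you write down.

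One remark on the $C([0,T])$-invariance issue you flag. Your fix (a), approximating $k$ in $L^1$ by kernels continuous in $x$ and invoking stability, does not close the gap: the stability estimate needs $L^\infty$ convergence of the kernel, and $L^1$ convergence only controls $\int_0^T\!\int_0^t |k^\lambda-k|(\sigma,t-\sigma)\,d\sigma\,dt$, not the supremum over $t$ of the inner integral that actually enters the Gronwall argument. (For general $k\in L^\infty$ the invariance can genuinely fail: take $k(\sigma,x)=h(\sigma+x)$ with $h$ essentially discontinuous, so that $\int_0^t k(\sigma,t-\sigma)u(\sigma)\,d\sigma=h(t)\int_0^t u$.) Your fix (b), using the Lipschitz regularity of $\beta_0\Gamma$ from \eqref{KF}, is the correct route and covers every later application of the theorem; the paper itself does not address the point and simply defers to the reference.
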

\begin{proof}
To prove existence we may use a standard argument based on the iterates
\[
\beq{l}
\dis u_0(t) = f(t) , \\\\
\dis u_n (t) =   \int_{0}^{t} k(x,t-x) u_{n-1}(x)dx+f(t) \quad (n \ge  1).
\eeq
\]
Indeed, using standard estimates (see for instance \cite{iannelli2017basic}) we may prove that $u_n \in C([0,T])$  and that  the sequence $u_n (t)$ converges in $C([0,T])$ to a  function $u(t)$ which is the unique solution to equation \eqref{VIE}. Besides, by Gronwall lemma we have the following estimate
\[
|u(t)| \le \left\| f\right\|_\infty e^{\left\| k\right\|_\infty T},
\]
so that, using again Gronwall lemma, we have
\[
|u(t) - u^\l(t)| \le  e^{\left\| k\right\|_\infty T} \left\| f - f^\l \right\|_\infty + T  e^{\left\| k^\l\right\|_\infty T} \left\| k - k^\l \right\|_\infty
\]
and \eqref{limVIE} follows.
\end{proof}

We use Theorem \ref{propVIE} to state precisely existence and properties of the solution of  \eqref{B}  under our assumptions: 
\begin{proposition}
Let assumptions \eqref{beta}-\eqref{Y0} hold, then equation \eqref{B} has a unique solution $Z \in W^{1,\infty}(0,T)$ such that
\be\label{stime}
0 \le Z(t)\le M ,
\ee
\be\label{stime1}
\left | Z(t_1) - Z(t_2) \right| \le M_L |t_1 - t_2| ,
\ee
where
\[
\beq{l}
\dis M =  \|\b_0\|_\infty \left\| Y_0\right\| e^{c_0 \| \b_0\|_\infty  T},\\\\
\dis M_L =\left( L  \left\| \frac{Y_0(\cdot)}{\Gamma(\cdot)} \right\|_1 + (L T +  \|\b_0\|_\infty)  \|\b_0\|_\infty \left\| Y_0\right\| e^{c_0\| \b_0\|_\infty  T} \right) .
\eeq
\]
\end{proposition}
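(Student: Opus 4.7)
The plan is to apply Theorem~\ref{propVIE} to equation~\eqref{B}, identifying the Volterra kernel and forcing term as $k(\sigma,x)=\mathcal{K}(\sigma,x)=c(\sigma)s(\sigma)\beta_0(x)\Gamma(x)$ and $f(t)=F(t)$ (after the change of variable $\sigma=t-x$ that brings \eqref{B} into the normal form used in Theorem~\ref{propVIE}). First I would verify the hypotheses of that theorem: since $c(\cdot)\le c_0$, $0<s(\cdot)\le 1$, $\Gamma\le 1$ on $[0,x_+]$, and $\beta_0\in L^\infty$ by~\eqref{beta}, one has $\mathcal{K}\in L^\infty((0,T)\times(0,\infty))$ with $\|\mathcal{K}\|_\infty\le c_0\|\beta_0\|_\infty$, while the Lipschitz estimate for $F$ recorded in~\eqref{KF} gives $F\in C([0,T])$. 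Theorem~\ref{propVIE} then produces a unique $Z\in C([0,T])$.

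For the nonnegativity, I would note that the Picard iterates $u_n$ built inside the proof of Theorem~\ref{propVIE} are nonnegative by induction (since $u_0=F\ge 0$ and $\mathcal{K}\ge 0$), and the uniform limit $Z$ inherits the sign. The upper bound $Z\le M$ is then immediate from the Gronwall estimate $|u(t)|\le\|f\|_\infty e^{\|k\|_\infty T}$ already derived inside Theorem~\ref{propVIE}, combined with
\[
\|F\|_\infty\le\|\beta_0\|_\infty\int_0^{\infty}\frac{\Gamma(x+t)}{\Gamma(x)}Y_0(x)dx\le\|\beta_0\|_\infty\|Y_0\|,
\]
which holds because $\Gamma$ is nonincreasing.

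The substantive step is the Lipschitz bound~\eqref{stime1}, which simultaneously upgrades $Z$ to $W^{1,\infty}(0,T)$. For $0\le t_1<t_2\le T$ I would decompose
\[
Z(t_2)-Z(t_1)=\int_{t_1}^{t_2}\mathcal{K}(\sigma,t_2-\sigma)Z(\sigma)d\sigma+\int_0^{t_1}\bigl[\mathcal{K}(\sigma,t_2-\sigma)-\mathcal{K}(\sigma,t_1-\sigma)\bigr]Z(\sigma)d\sigma+F(t_2)-F(t_1),
\]
and control the three pieces respectively by $\|\mathcal{K}\|_\infty\,M\,(t_2-t_1)$, by $c_0 L\,T\,M\,(t_2-t_1)$ using the second Lipschitz estimate in~\eqref{KF} together with the just-proved bound $Z\le M$, and by $L\|Y_0/\Gamma\|_1(t_2-t_1)$ using the last estimate in~\eqref{KF}. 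Summing these contributions gives a constant of the shape of $M_L$, proving~\eqref{stime1} and hence that $Z\in W^{1,\infty}(0,T)$.

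The only real obstacle is the bookkeeping in the last step: one has to isolate the contribution over $[t_1,t_2]$, where the kernel is of order one but the integration interval has length $t_2-t_1$, from the contribution over $[0,t_1]$, where the integration interval is bounded by $T$ but the kernel difference is itself of order $|t_2-t_1|$. Once this decomposition is in place, every estimate follows directly from~\eqref{KF} and the already established bound $Z\le M$, with no further analytic input required.
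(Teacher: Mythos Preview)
Your proposal is correct and follows essentially the same route as the paper: existence and uniqueness from Theorem~\ref{propVIE}, nonnegativity via the Picard iterates, the upper bound via Gronwall together with $\|F\|_\infty\le\|\beta_0\|_\infty\|Y_0\|$, and the Lipschitz estimate by splitting the difference $Z(t_2)-Z(t_1)$ into the three pieces controlled through~\eqref{KF}. The only cosmetic difference is that the paper restates the Gronwall step explicitly from~\eqref{B} rather than quoting the bound already recorded in Theorem~\ref{propVIE}; also note that carrying the $c_0$ factors from~\eqref{KF} through your three-piece estimate yields a Lipschitz constant slightly larger (by harmless factors of $c_0$) than the $M_L$ displayed in the statement.
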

\begin{proof}
Noticing that $F(t)$ is continuous, existence and uniqueness hold by Theorem \ref{propVIE}. In addition, by \eqref{KF} we have that the iterates mentioned in the proof are non negative so that also $Z(t)$ is. Moreover, from \eqref{B} we have
\[
0 \le Z(t) \le  \left \| F(\cdot)\right \|_{\infty} + c_0 \|\b_0\|_\infty  \int_0^t Z(s) ds  \le  \|\b_0\|_\infty \left\| Y_0\right\| + c_0 \|\b_0\|_\infty  \int_0^t Z(s) ds 
\]
and \eqref{stime} follows by Gronwall lemma. Finally, since
\[
\beq{l}
\dis \left | Z(t_1) - Z(t_2) \right | \le \left| F(t_1)- F(t_2 )\right| \\\\
\dis \hskip1cm + \int_{0}^{t_1} \left |\mathcal{K}(x , t_1-x) - \mathcal{K}(x , t_2-x)\right |  Z(x) dx + \left| \int_{t_1}^{t_2} \mathcal{K}(x , t_2-x) Z(x) dx\right| ,
\eeq
\]
using \eqref{KF}, we have \eqref{stime1}.
\end{proof}

Once we have the solution $Z(t)$ to problem \eqref{B} we get the solution to \eqref{modelreduced} via formula \eqref{Y}. Indeed we can directly check  that the function $Y(x,t)$ given by \eqref{Y} is the solution of (\ref{modelreduced}, $i$), satisfying condition (\ref{modelreduced}, $ii$). In addition we have
\be\label{Ystima}
\left| Y(x,t) \right|\  \le\  \left\| Y_0 \right\|_\infty + c_0 M .
\ee

Note that equation \eqref{B} is all we need to solve problem \eqref{modelreduced} plugging its solution $Z(t)$ into formula \eqref{Y}. Moreover, in the cost function \eqref{cost} the variable $Y(x,t)$ appears only through $Z(t)$ (see \eqref{zeta}), thus it reads
\be\label{nuovacost}
\beq{l}
\dis \Phi(c,v) = A_0 \int_0^{T} c(t) s(t) Z(t) dt + A_1\int_0^{T}  Q\left(c(t)\right)  dt\\\\
\dis \hskip5cm+ \int_0^{T} \left(A_{21} v(t)+ \frac{A_{22}}{2} v^2(t) \right) dt .
\eeq
\ee
It is then convenient to adopt the following state system on the variables $s(t)$ and $Z(t)$
\be\label{statesys}
\left\{
\beq{rl}
i)&\dis s'(t) = -v(t) + \d (1 -s(t)) ,\\\\
ii)&\dis Z(t)= \int_{0}^{t} \mathcal{K}(t-x,x) Z(t-x)dx+F(t) .
\eeq
\right.
\ee
Thus, in the following we will focus on minimizing \eqref{nuovacost} under the state system \eqref{statesys}.

We now prove
\begin{theorem}
Let the assumptions \eqref{beta}-\eqref{Y0},\eqref{Q}, be satisfied, then there exists at least one optimal control $(c^* , v^* )$ minimizing the cost function \eqref{nuovacost} .
\end{theorem}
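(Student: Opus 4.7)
The plan is to use the direct method of the calculus of variations, exploiting the weak-* closure of $\cU$ already noted in the paper together with the regularity estimates \eqref{stime} and \eqref{stime1} for $Z(t)$.

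\textbf{Step 1: minimizing sequence and weak-* compactness.} Let $d^{*}=\inf_{\cU}\Phi$. Since $\Phi$ is bounded below (the $A_{0}$, $A_{1}$, $A_{21}$ and $A_{22}$ terms are all non-negative on $\cU$) and not identically $+\infty$, $d^{*}\in[0,\infty)$. Pick a minimizing sequence $(c_{n},v_{n})\in\cU$ with $\Phi(c_{n},v_{n})\to d^{*}$. Because $c_{-}\le c_{n}\le c_{0}$ and $0\le v_{n}\le\delta$, the sequence is bounded in $L^{\infty}(0,T)\times L^{\infty}(0,T)$, so by Banach--Alaoglu a subsequence (still denoted $(c_{n},v_{n})$) converges weakly-* to some $(c^{*},v^{*})$. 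Since $\cU$ is weak-* closed, $(c^{*},v^{*})\in\cU$.

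\textbf{Step 2: convergence of the states.} Using the closed-form expression \eqref{s} with kernel $e^{-\delta(t-\sigma)}\in L^{1}(0,T)$, weak-* convergence $v_{n}\rightharpoonup^{*}v^{*}$ yields $s_{n}(t)\to s^{*}(t)$ for every $t\in[0,T]$; uniform convergence on $[0,T]$ then follows from the uniform Lipschitz bound $|s_{n}'|\le\delta+\delta=2\delta$ and Arzel\`a--Ascoli. For $Z_{n}$, the estimates \eqref{stime} and \eqref{stime1} give a uniform bound and a uniform modulus of continuity independent of $n$, so a further subsequence satisfies $Z_{n}\to Z^{*}$ strongly in $C([0,T])$. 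It remains to identify $Z^{*}$ as the solution of \eqref{statesys}$ii$) associated to $(c^{*},v^{*})$. Writing the Volterra equation after the change of variable $\sigma=t-x$,
\begin{equation*}
Z_{n}(t)=\int_{0}^{t}c_{n}(\sigma)\,s_{n}(\sigma)\,\beta_{0}(t-\sigma)\,\Gamma(t-\sigma)\,Z_{n}(\sigma)\,d\sigma+F(t),
\end{equation*}
the bracket $s_{n}(\sigma)Z_{n}(\sigma)\beta_{0}(t-\sigma)\Gamma(t-\sigma)$ converges strongly in $L^{1}(0,t)$ while $c_{n}\rightharpoonup^{*}c^{*}$, so the integral passes to the limit pointwise in $t$, and $Z^{*}$ solves the Volterra equation for the limit kernel $\mathcal{K}^{*}(\sigma,x)=c^{*}(\sigma)s^{*}(\sigma)\beta_{0}(x)\Gamma(x)$. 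By uniqueness this $Z^{*}$ coincides with the state associated to $(c^{*},v^{*})$.

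\textbf{Step 3: lower semicontinuity of the cost.} The first term $A_{0}\int_{0}^{T}c_{n}s_{n}Z_{n}\,dt$ is actually \emph{continuous}: $s_{n}Z_{n}\to s^{*}Z^{*}$ uniformly, hence strongly in $L^{1}(0,T)$, and $c_{n}\rightharpoonup^{*}c^{*}$ in $L^{\infty}(0,T)$, so their pairing converges. The term $A_{21}\int v_{n}\,dt$ is weak-* continuous. The quadratic term $\tfrac{A_{22}}{2}\int v_{n}^{2}\,dt$ is convex and continuous in $v\in L^{2}(0,T)$, hence weakly lower semicontinuous in $L^{2}$; weak-* convergence in $L^{\infty}$ on the bounded interval implies weak convergence in $L^{2}$, so the term is weakly-* lsc. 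Similarly, by \eqref{Q} the map $c\mapsto\int_{0}^{T}Q(c(t))\,dt$ is convex and lsc on the bounded set $\{c_{-}\le c\le c_{0}\}$ (note $c_{-}>0$ keeps $Q$ finite and continuous there), and convex lsc integrands yield weakly lsc integral functionals. Combining,
\begin{equation*}
\Phi(c^{*},v^{*})\le\liminf_{n\to\infty}\Phi(c_{n},v_{n})=d^{*},
\end{equation*}
so $(c^{*},v^{*})$ attains the minimum.

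\textbf{Anticipated main obstacle.} The only non-routine step is passing to the limit in the nonlinear Volterra equation \eqref{statesys}$ii$): weak-* convergence of $c_{n}$ is by itself not enough to identify $Z^{*}$ as the correct state. The argument hinges on the Lipschitz estimate \eqref{stime1}, which upgrades the weak compactness of $Z_{n}$ to strong uniform compactness and allows the product $s_{n}Z_{n}\beta_{0}\Gamma$ to serve as a strong $L^{1}$ test function against the weakly-* converging $c_{n}$. The rest is standard.
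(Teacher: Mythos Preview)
Your proof is correct and follows essentially the same route as the paper's: a minimizing sequence, weak-* compactness in $L^\infty\times L^\infty$, uniform compactness of the states via the Lipschitz estimates \eqref{stime}--\eqref{stime1} and Arzel\`a--Ascoli, and then weak lower semicontinuity term by term. Your write-up is in fact slightly more explicit than the paper's at the one delicate point---passing to the limit in the Volterra equation by pairing the strongly $L^1$-convergent factor $s_n Z_n\,\beta_0(t-\cdot)\Gamma(t-\cdot)$ against the weak-* limit of $c_n$---which the paper asserts without detail.
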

\begin{proof}
Since $\Phi (c,v)\geq 0$, it
follows that $\Phi$ has an infimum $d$ which is nonnegative. Let us take a
minimizing sequence $(c_{n},v_{n})\in \mathcal{U}$ such that\begin{equation}
d\leq \Phi (c_{n},v_{n})\leq d+\frac{1}{n},\mbox{ for }n\geq 1.
\label{d}
\end{equation}
and denote by $(s_{n}, Z_{n})$  the solution to the state system \eqref{statesys}  corresponding to $(c_{n},v_{n})$; namely 
\be\label{sn}
s_n(t)=e^{-\delta t}+\int_{0}^{t}e^{-\delta (t-\s)}(\delta -v_n(\s))d\s,\quad t\in [0 , T] .
\ee
and $Z_n(t)$ satifies
\be\label{Zn}
Z_n(t)= \int_{0}^{t} c_n(x) s_n(x) \beta _{0}(t-x) \Gamma (t-x) Z_n(x)dx+F(t) .
\ee

By \eqref{stime}-\eqref{stime1} it follows that on a subsequence we have
\be\label{convergenze}
\left\{
\beq{ll}
\dis c_n \to c^* & \mbox{ weak-* in }L^{\infty }(0,T)\\\\
\dis v_n \to v^*  &\mbox{ weak-* in }L^{\infty }(0,T)\\\\
\dis s_{n}\rightarrow s^*&\mbox{uniformly in }[0,T],\\\\
Z_{n}\rightarrow Z^{\ast }&\mbox{uniformly in }[0,T],
\eeq
\right.
\ee
with $(c^* , v^*) \in \cU$ because $\cU$ is  weak-* closed. Then we can go to the limit in \eqref{sn}-\eqref{Zn} and prove that  ($s^*, Z^*$), is indeed a solution to the state system.

Next, we pass to the limit in (\ref{d}), by calculating separately some terms in $\Phi (c_{n},v_{n})$. Indeed,  first the relations \eqref{convergenze} imply that 
\begin{equation}
\lim_{n\rightarrow \infty
}\int_{0}^{T}c_{n}(t)s_{n}(t)Z_{n}(t) dt=\int_{0}^{T}c^{\ast }(t)s^{\ast }(t) Z^{\ast }(t) dt=:I_{1}.  \label{I1n}
\end{equation}
Relying on the weakly lower semicontinuity property of the function $\dis v\rightarrow \int_{0}^{T}v^{2}(t)dt$ we have 
\[
\liminf_{n\rightarrow \infty }\int_{0}^{T}v_{n}^{2}(t)dt\geq
\int_{0}^{T}{v^{\ast }}^{2}(t)dt 
\]
and so 
\begin{equation} \label{I3n}
\beq{l}
\dis \liminf_{n\rightarrow \infty }\int_{0}^{T}\left( A_{21}v_{n}(t)+\frac{1}{2}
A_{22}v_{n}^{2}(t)\right) dt\\\\
\dis \hskip3.5cm\geq \int_{0}^{T}\left( A_{21}v^{\ast }(t)+\frac{
1}{2}A_{22} {v^{\ast }}^{2}(t)\right) dt=:I_{2}. 
\eeq
\end{equation}

Finally, for the remaining term of $\Phi(c_{n},v_{n})$ we have
\[
 \liminf_{n\rightarrow \infty
}\int_{0}^{T}Q(c_{n}(t))dt\geq \int_{0}^{T}Q(c^{\ast }(t))dt=:I_{3} ,
\]
where we used the lower semicontinuity of $Q$ (see \eqref{Q}).

Collecting the inequalities above, we conclude that, going back to (\ref{d}) we have 
\[
\Phi (c^{\ast },v^{\ast })=I_{1}+I_{2}+I_{3}\leq
\liminf_{n\rightarrow \infty }\Phi (c_{n},v_{n})\leq d, 
\]
that is $d\leq \Phi (c^{\ast },v^{\ast })\leq d$, which proves
that $(c^{\ast },v^{\ast })$ is optimal.
\end{proof}

%%%%%%%%%%%%%%%%%%%%%%%%%%%%%%%%%%%%%%%%%%%%%%%%%%%%%%%%%%
\section{First-order necessary conditions of optimality}\label{opcond}
%%%%%%%%%%%%%%%%%%%%%%%%%%%%%%%%%%%%%%%%%%%%%%%%%%%%%%%%%%

In this section we derive the necessary conditions a couple $(c^{\ast },v^{\ast })$ must satisfy in order to be optimal. To this end we strengthen condition \eqref{Q} assuming
\begin{equation}\label{assu-Q}
Q\in C^{1}([c_{-},c_{0}]),\quad Q\ \ \hbox{is strictly convex}, \ \ c_- > 0 .
\end{equation}

To identify the optimality conditions,  we now preliminarily determine the system in variation related to the state system \eqref{statesys},  considering an optimal couple $(c^* , v^*)$ and the variation $(c^\l,v^\l) \in \cU$, with
\be\label{variazioni}
\beq{ll}
c^{\lambda }(t):=c^{\ast }(t)+\lambda \varsigma(t) ,&\varsigma :=\widetilde{c}-c^{\ast },\\\\
v^{\lambda }(t):=v^{\ast }(t)+\lambda \omega(t) ,&\omega :=\widetilde{v}-v^{\ast },
\eeq
\ee
where $\l\in (0,1)$ and $( \widetilde c,\widetilde v)\in \cU$. Note that , since
\[
 \widetilde v(t)=v^\l(t)=0,\ \ \hbox{a.e. in}\ \ [0,T_1],
\] 
then 
\be\label{nuovo}
 \omega(t) =0,\ \ \hbox{a.e. in}\ \ [0,T_1] .
\ee

Corresponding to $(c^{*},v^{*})$ and $(c^{\lambda },v^{\lambda })$, the state system (\ref{statesys})
has unique solutions that we respectively denote $(s^{*},Z^{*})$ and $(s^{\lambda },Z^{\lambda })$. Then we consider the limits
\begin{equation} \label{state-var}
z=\lim_{\lambda \rightarrow 0}\frac{s^{\lambda }-s^{\ast }}{\lambda },\quad D=\lim_{\lambda \rightarrow 0}\frac{Z^{\lambda }-Z^{\ast }}{\lambda }.
\end{equation}
and, on the basis of (\ref{s}) and (\ref{B}), we  prove 
\begin{proposition}\label{pro1}
Let the assumptions \eqref{beta}-\eqref{Y0}  hold, then under the variation \eqref{variazioni}, the limits \eqref{state-var} exist in $C([0,T])$ and satisfy the system in variations
\begin{equation} \label{syst-var}
\left\{ 
\beq{rl}
i)&\dis z^{\prime }(t)=-\omega (t)-\delta z(t),\quad z(0)=0 , \\\\ 
ii)&\dis D(t) = \int_0^t  \mathcal{K}^*(x , t-x ) D(x) dx  +  \int_0^t f_{var}(x)\b_0(t-x) \Gamma(t-x) dx ,
\eeq
\right. 
\end{equation}
with 
\be\label{fvar}
f_{var}(t)=\left( c^{\ast }(t)z(t)+\varsigma (t)s^{\ast }(t)\right) Z^*(t). 
\ee
\end{proposition}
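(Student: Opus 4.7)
The plan is to handle the two equations of the state system \eqref{statesys} separately, since equation $i)$ admits a closed-form solution via \eqref{s} while equation $ii)$ is a Volterra integral equation whose sensitivity can be controlled through the continuity statement of Theorem \ref{propVIE}.

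For $i)$, I would observe that formula \eqref{s} gives the exact identity $s^{\lambda }(t)-s^{\ast }(t)=-\lambda \int_{0}^{t}e^{-\delta (t-\sigma )}\omega (\sigma )d\sigma $, so the difference quotient is in fact independent of $\lambda $ and equals
\[
z(t):=-\int_{0}^{t}e^{-\delta (t-\sigma )}\omega (\sigma )d\sigma .
\]
This function manifestly lies in $C([0,T])$, satisfies $z(0)=0$, and a direct differentiation yields $z'=-\omega -\delta z$. The auxiliary fact that $z(t)=0$ on $[0,T_{1}]$ is immediate from \eqref{nuovo}.

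For $ii)$, I would set $D^{\lambda }:=(Z^{\lambda }-Z^{\ast })/\lambda $, subtract the Volterra equations satisfied by $Z^{\lambda }$ and $Z^{\ast }$, and use the algebraic decomposition
\[
c^{\lambda }s^{\lambda }Z^{\lambda }-c^{\ast }s^{\ast }Z^{\ast }=c^{\lambda }s^{\lambda }(Z^{\lambda }-Z^{\ast })+(c^{\lambda }s^{\lambda }-c^{\ast }s^{\ast })Z^{\ast }
\]
to isolate $D^{\lambda }$. This produces a linear Volterra equation of the form $D^{\lambda }(t)=\int_{0}^{t}\mathcal{K}^{\lambda }(x,t-x)D^{\lambda }(x)\,dx+g^{\lambda }(t)$ with kernel $\mathcal{K}^{\lambda }(\sigma ,x):=c^{\lambda }(\sigma )s^{\lambda }(\sigma )\beta _{0}(x)\Gamma (x)$ and a source $g^{\lambda }$ built from the quotient $(c^{\lambda }s^{\lambda }-c^{\ast }s^{\ast })/\lambda $. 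Crucially, because $c$ and $s$ are both exactly affine in $\lambda $ (the former by the definition of $c^{\lambda }$ in \eqref{variazioni}, the latter by Step $i)$ above), this quotient equals algebraically $\varsigma s^{\ast }+c^{\ast }z+\lambda \varsigma z$, with no hidden first-order remainder.

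It then remains to pass to the limit $\lambda \to 0$ via Theorem \ref{propVIE}. Uniform boundedness of $\varsigma $, $z$, and $\beta _{0}\Gamma $ gives $\mathcal{K}^{\lambda }\to \mathcal{K}^{\ast }$ in $L^{\infty }$ at rate $O(\lambda )$, while $g^{\lambda }$ converges uniformly on $[0,T]$ to $\int_{0}^{t}f_{var}(x)\beta _{0}(t-x)\Gamma (t-x)dx$ with $f_{var}$ as in \eqref{fvar}; continuity in $t$ of this limit, needed to apply Theorem \ref{propVIE}, is inherited from the Lipschitz estimate on $\beta _{0}\Gamma $ already recorded in \eqref{KF}. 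The theorem then delivers $D^{\lambda }\to D$ in $C([0,T])$ with $D$ solving $ii)$ of \eqref{syst-var}. The main delicate point is not any single estimate but the algebraic bookkeeping that turns the formally nonlinear difference $c^{\lambda }s^{\lambda }Z^{\lambda }-c^{\ast }s^{\ast }Z^{\ast }$ into a linearized Volterra equation for $D^{\lambda }$ whose kernel and source have the right $L^{\infty }$ and $C^{0}$ limits as $\lambda \to 0$; once this decomposition is in place, Theorem \ref{propVIE} does the remaining work.
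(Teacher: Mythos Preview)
Your proof is correct and follows essentially the same route as the paper: the exact identity for $(s^{\lambda}-s^{\ast})/\lambda$ from \eqref{s}, the decomposition of $c^{\lambda}s^{\lambda}Z^{\lambda}-c^{\ast}s^{\ast}Z^{\ast}$ yielding a Volterra equation for $D^{\lambda}$ with kernel $\mathcal{K}^{\lambda}$ and source $\int_{0}^{t}\frac{\mathcal{K}^{\lambda}-\mathcal{K}^{\ast}}{\lambda}Z^{\ast}\,dx$, and the passage to the limit via Theorem~\ref{propVIE}. Your explicit observation that $(c^{\lambda}s^{\lambda}-c^{\ast}s^{\ast})/\lambda=\varsigma s^{\ast}+c^{\ast}z+\lambda\varsigma z$ exactly (no remainder) is a slight sharpening of presentation, but the argument is the same.
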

\begin{proof}
Indeed, \eqref{s} yields
\[
\frac{s^{\lambda }-s^{\ast }}{\lambda }=z(t)=-\int_{0}^{t}e^{-\delta (t-s)}\omega (s)ds, 
\]
equivalent to (\ref{syst-var}, $i$). Besides, from \eqref{B} we have
\begin{equation}\label{Blambda}
Z^\l(t)= \int_{0}^{t} \mathcal{K}^\l(t-x,x) Z^\l(t-x)dx+F(t),  
\end{equation}
\begin{equation} \label{Bstar}
Z^*(t)= \int_{0}^{t} \mathcal{K}^*(t-x,x) Z^*(t-x)dx+F(t), 
\end{equation}
with
\[
\mathcal{K}^\l(\s,x )=c^\l(\s) s^\l(\s) \beta _{0}(x) \Gamma (x) ,\quad \mathcal{K}^*(\s,x )=c^*(\s) s^*(\s) \beta _{0}(x) \Gamma (x) 
\]
(here again, $ \beta _{0}(x)$ and  $\Gamma (x)$ are extended by zero). Then the variable 
\[
\dis D^\l(t) =\frac{Z^\l (t)- Z^*(t)}\l
\]
satisfies
\[
D^\l(t) = \int_0^t  \mathcal{K}^\l(x , t-x ) D^\l(x) dx  + \int_0^t \frac { \mathcal{K}^\l(x, t-x )  -  \mathcal{K}^*(x , t-x) }\l Z^*(x) dx .
\]
Since
\[
\lim_{\l \to 0} \mathcal{K}^\l(\s,x )= \mathcal{K}^*(\s,x ),
\]
and
\[
\lim_{\l \to 0} \frac{\mathcal{K}^\l(\s,x )- \mathcal{K}^*(\s,x )} \l =  \left[ \varsigma(\s) s^*(\s) + c^*(\s) z(\s) \right] \b_0(x)\Gamma(x),
\]
in $L^\infty((0,T)\times (0,+\infty))$, we have (see \eqref{fvar})
\[
\lim_{\l \to 0}\int_0^t \frac { \mathcal{K}^\l(x, t-x )  -  \mathcal{K}^*(x , t-x) }\l Z^*(x) dx =  \int_0^t f_{var}(x)\b_0(t-x) \Gamma(t-x) dx, 
\]
in $C([0,T])$.  Then, by Theorem \ref{propVIE} ,we have that  $\dis \lim_{\l \to 0}  D^\l = D$ in $C([0,T])$, where $D$ satisfies (\ref{syst-var}, $ii$).
\end{proof}

Concerning the integral equation (\ref{syst-var}, $ii$), we need to remark that its solution $D(t)$ satisfies the following equality
\be\label{dualeq}
\beq{l}
\dis \int_0^T \left[  h(t) - \int_t^T \cK^*(t , x-t) h(x) dx \right] D(t) dt \\\\
\dis \hskip4cm= \int_0^T f_{var}(t)\int_t^T \b_0(\s - t) \Gamma(\s - t) h(\s) d\s dt
\eeq
\ee
for any function $h \in C([0,T])$. Indeed, we can check \eqref{dualeq} integrating (\ref{syst-var}, $ii$) multiplied by $h(t)$ over the interval $[0,T]$ and then exchanging the integrals. Actually, together with equation (\ref{syst-var}, $i$), this equality inspires the following dual system on the variables $(p(t),q(t))$ respectively corresponding to the state variables $(s(t),Z(t))$
\begin{equation} \label{dual}
\left\{ 
\beq{rll}
i)&\dis -p^{\prime }(t)+\delta p(t)= q(t) c^{\ast }(t)Z^*(t), & p(T)=0\\ \\ 
ii)&\dis q(t) = \int_t^T \cK^*(x, x-t) q(x) dx  +  A_0 .
\eeq
\right. 
\end{equation}

Concerning this problem we have:
\begin{proposition}
Let the assumptions \eqref{beta}-\eqref{Y0}  be satisfied, then for any couple $(c^*, v^*)\in \cU$, system \eqref{dual} has a unique solution $(p,q) \in C([0,T])\times C([0,T])$.
\end{proposition}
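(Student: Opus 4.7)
The plan is to exploit the fact that the dual system \eqref{dual} is triangular: equation (ii) decouples from equation (i) since the unknown $q$ appears only in (ii), while $p$ appears only in (i) with $q$ as a given inhomogeneity. So I would first establish the existence and uniqueness of $q \in C([0,T])$, and then, treating $q$ as known, read (i) as an elementary linear ODE for $p$ with terminal datum $p(T)=0$.

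For the first step, equation (\ref{dual},$ii$) is a backward Volterra integral equation on $[0,T]$, so to invoke Theorem \ref{propVIE} directly I would reverse time by setting $\tilde q(\tau) := q(T-\tau)$. Substituting $x = T-\xi$ in the integral yields
\[
\tilde q(\tau) = \int_0^\tau \widetilde{\mathcal{K}}(\xi,\tau-\xi)\,\tilde q(\xi)\,d\xi + A_0,\quad \tau\in[0,T],
\]
where $\widetilde{\mathcal K}(\xi,y) := \mathcal{K}^*(T-\xi,y) = c^*(T-\xi)s^*(T-\xi)\beta_0(y)\Gamma(y)$. I would check that $\widetilde{\mathcal{K}} \in L^\infty((0,T)\times(0,+\infty))$: indeed $c^*\in L^\infty$ by $\cU$, $s^*$ is continuous (hence bounded) by \eqref{s}, and $\beta_0\Gamma \in L^\infty$ by \eqref{beta} combined with $\Gamma\le 1$. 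Since the forcing $\tau\mapsto A_0$ is constant, Theorem \ref{propVIE} then yields a unique $\tilde q\in C([0,T])$, and hence $q(t) = \tilde q(T-t) \in C([0,T])$ is the unique continuous solution of (\ref{dual},$ii$).

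For the second step, once $q$ is secured, the right-hand side $g(t) := q(t)c^*(t)Z^*(t)$ of (\ref{dual},$i$) belongs to $L^\infty(0,T)$ (using $q\in C$, $c^*\in L^\infty$, and $Z^*\in W^{1,\infty}$ from \eqref{stime}). The equation $-p'+\delta p = g$ on $[0,T]$ with $p(T)=0$ has the unique explicit solution
\[
p(t) = \int_t^T e^{\delta(t-s)} q(s)\,c^*(s)\,Z^*(s)\,ds,
\]
obtained via the integrating factor $e^{-\delta t}$; this $p$ is absolutely continuous and, in particular, $p\in C([0,T])$. Uniqueness comes for free from the uniqueness of the ODE with assigned terminal datum. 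Combining the two steps proves existence and uniqueness in $C([0,T])\times C([0,T])$.

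I do not anticipate a serious obstacle here beyond the bookkeeping involved in the time-reversal of the Volterra equation and the verification that the reversed kernel meets the hypotheses of Theorem \ref{propVIE}; everything else is a direct application of results already established. The substantive mathematical content of the dual system (the duality identity \eqref{dualeq} that motivates it) has already been worked out above, so at this stage the proposition is essentially a solvability statement to which the linear Volterra theory applies.
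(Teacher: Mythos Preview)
Your proposal is correct and follows essentially the same approach as the paper: you decouple the system by solving (\ref{dual},$ii$) first via the time reversal $\tilde q(\tau)=q(T-\tau)$ and Theorem~\ref{propVIE}, then integrate the linear terminal-value problem (\ref{dual},$i$) explicitly to obtain $p(t)=\int_t^T e^{\delta(t-s)}q(s)c^*(s)Z^*(s)\,ds$. Your additional verification that the reversed kernel lies in $L^\infty((0,T)\times(0,+\infty))$ is a welcome bit of bookkeeping that the paper leaves implicit.
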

\begin{proof}
Noticing that equation (\ref{dual}, $ii$) is independent of $p(t)$, we can consider it separately and then plug $q(t)$ into 
(\ref{dual}, $i$).  Actually, (\ref{dual}, $ii$) can be conveniently transformed using the variable $u(t)=q(T-t)$, obtaining
\[
u(t)= \int_0^t \cK^*(T-x , t-x) u(x) dx  + A_0 .
\]
By Theorem \ref{VIE} a unique solution $u\in C([0,T])$ exists  and $q(t)= u(T-t)$ is the unique solution to (\ref{dual}, $ii$).

Finally we use $q(t)$ in (\ref{dual}, $i$) and solve explicitly for $p(t)$ obtaining
\be\label{p}
p(t) =\int_t^T e^{\delta(t-\s)} q(\s) c^{\ast }(\s)Z^*(\s) d\s ,
\ee
which is continuous.
\end{proof}

Before stating the main theorem of this section, we consider the multivalued function (actually the maximal monotone graph representing the normal cone of the interval $[a,b]$), $N_{[a,b]}: [a,b] \to 2^{\erre}$ defined as
\[
N_{[a,b]}(x) = \left\{ 
\beq{lll}
\dis (-\infty ,  0]&\hbox{for}& x=a\\
\dis 0&\hbox{for}& x\in (a , b)\\
\dis [ 0,\infty)&\hbox{for}& x=b ,\\
\eeq
\right.
\]
and recall the following Lemma (see \cite{Barbu1994})
\begin{lemma}\label{lemma}
Consider the convex set $C_{a,b}\subset L^2(T_1,T_2)$ defined as
\[
C_{a,b} \equiv \left\{ u \in L^2(T_1,T_2) ;\  a \le u(t) \le b,\ \hbox{ a.e. in}\  [T_1,T_2] \right\},
\]
then the normal cone of $C_{a,b}$ at $u\in C_{a,b}$ is given by
\[
\cN_{a,b} = \left\{ v \in L^2(T_1,T_2) ;\  v(t) \in N_{a,b}(u(t)), \ \hbox{ a.e. in}\  [T_1,T_2]\right\}.
\]
If moreover $f : [a,b] \to \erre$ is any strictly increasing function defined in $[a,b]$, then, defining
\[
\cF : C_{a,b} \to L^2(T_1,T_2),\quad \cF(u)(t) = f(u(t)),\ \ t\in [T_1,T_2]  
\]
we have $\left( \cF + \cN_{a,b}\right)^{-1} : L^2(T_1,T_2) \to C_{a,b}$ and
\[
\left( \cF + \cN_{a,b}\right)^{-1}(u)(t) = \cL[f,a,b](u(t)),\quad \hbox{a.e. in}\ \ [T_1,T_2],\ \ \forall u\in L^2(T_1,T_2),
\]
where the function $\cL[f,a,b] : \erre \to \erre$ is defined as
\[
\cL[f,a,b](x) \equiv \left\{
\beq{lll}
a&\hbox{for}& x < f(a)\\
f^{-1}(x)&\hbox{for}& f(a) \le x \le f(b)\\
b&\hbox{for}& x > f(b)
\eeq
\right. .
\]
\end{lemma}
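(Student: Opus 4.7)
The plan is to reduce both claims to pointwise assertions on the interval $[a,b] \subset \erre$, exploiting the fact that membership in $C_{a,b}$ is governed by the pointwise constraint $u(t) \in [a,b]$ a.e. Both the normal cone and the graph inversion then decouple across $t$, and the Hilbert-space statements follow by integrating (or measurably selecting) the corresponding pointwise facts from convex analysis on the line.

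First I would establish the normal cone characterization. The $\supseteq$ inclusion is the easy direction: if $v \in L^2(T_1,T_2)$ satisfies $v(t) \in N_{[a,b]}(u(t))$ a.e., then for any admissible $w \in C_{a,b}$ the defining inequality of the pointwise normal cone, applied to $w(t) \in [a,b]$, gives $v(t)(w(t)-u(t)) \le 0$ a.e.; integrating over $[T_1,T_2]$ yields $\langle v, w-u \rangle_{L^2} \le 0$, so $v$ lies in the normal cone of $C_{a,b}$ at $u$. For the converse, I would argue by contradiction: if the measurable set $E = \{t : v(t) \notin N_{[a,b]}(u(t))\}$ had positive measure, one could construct an admissible perturbation $w$ (equal to $u$ outside $E$ and to $a$ or $b$ on $E$, chosen according to the sign of $v(t)$) that strictly violates the integrated normal-cone inequality.

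Next, for the inverse formula, I would note that $\cF + \cN_{a,b}$ is the sum of a continuous monotone operator (induced by the strictly increasing $f$) and the maximal monotone normal cone of a closed convex set, hence is itself maximal monotone with a single-valued inverse. Writing $u \in \cF(x) + \cN_{a,b}(x)$ and using the pointwise normal-cone characterization from the first part, this becomes $u(t) \in f(x(t)) + N_{[a,b]}(x(t))$ with $x(t) \in [a,b]$ a.e., which is solved by a three-case analysis: the interior case $x(t)\in(a,b)$ forces $N_{[a,b]}(x(t))=\{0\}$ and hence $x(t)=f^{-1}(u(t))$, consistent exactly when $f(a)\le u(t)\le f(b)$; the boundary cases $x(t)=a$ and $x(t)=b$ correspond, via $N_{[a,b]}(a) = (-\infty,0]$ and $N_{[a,b]}(b) = [0,\infty)$, to $u(t) < f(a)$ and $u(t) > f(b)$ respectively. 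Assembling these cases reproduces exactly $x(t) = \cL[f,a,b](u(t))$.

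The main delicate point will be measurability. For the $\subseteq$ direction of the normal-cone statement, the perturbation on the exceptional set must be measurable; this is handled either by a standard measurable-selection theorem applied to the multifunction $t \mapsto \{x\in[a,b] : v(t)(x-u(t))>0\}$ on $E$, or, more elementarily, by partitioning $E$ into the Borel sets $\{v>0\}$ and $\{v<0\}$ and setting $w=a$ or $w=b$ accordingly. For the inverse formula, one must verify that $t \mapsto \cL[f,a,b](u(t))$ lies in $C_{a,b} \subset L^2$: boundedness is immediate from $\cL[f,a,b] \in [a,b]$, and Borel measurability of $\cL[f,a,b]:\erre\to\erre$ follows from its piecewise definition as two constants joined by the continuous function $f^{-1}$ on $[f(a),f(b)]$ (continuity of $f^{-1}$ being a consequence of strict monotonicity of $f$ on the compact interval $[a,b]$).
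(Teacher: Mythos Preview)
Your sketch is correct and follows the standard route for such results: reduce the normal cone to its pointwise description via an elementary perturbation argument, then invert the pointwise inclusion $u(t)\in f(x(t))+N_{[a,b]}(x(t))$ by the three-case analysis. Note, however, that the paper does not actually prove this lemma; it simply cites Barbu's monograph, so there is no in-paper argument to compare against. Your approach is precisely the one a reader would reconstruct from that reference.

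One small caveat worth flagging: your claim that $f^{-1}$ is continuous on $[f(a),f(b)]$---and, more basically, that $f^{-1}$ is \emph{defined} on all of $[f(a),f(b)]$---tacitly requires $f$ to be continuous on $[a,b]$, whereas the lemma as stated assumes only strict monotonicity. Without continuity, the range of $f$ may omit values in $(f(a),f(b))$, and for such $u(t)$ the pointwise inclusion has no solution in $(a,b)$ (nor at the endpoints), so $(\cF+\cN_{a,b})^{-1}$ would fail to be defined on all of $L^2$. This is a defect inherited from the lemma's statement rather than from your argument; in the paper's actual applications ($f_1=A_1Q'$ with $Q\in C^1$, $f_2$ linear) $f$ is continuous and the issue does not arise, but in writing out the proof you should either add continuity of $f$ as a hypothesis or comment that the formula for $\cL[f,a,b]$ presupposes it.
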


Noticing that by \eqref{assu-Q}  $Q'$ is increasing in $[c_- , c_0]$ we consider the increasing functions 
\[
\beq{ll}
\dis f_1: [c_- , c_0] \to \erre,&f_1(x)= A_1 Q'(x) , \\\\
\dis f_2: [0, \delta] \to \erre,& f_2(x)= A_{22} x .
\eeq
\]
Then we have
\begin{theorem}\label{optimaltheorem}
Let the assumptions  \eqref{beta}-\eqref{Y0}, \eqref{assu-Q} hold and let $(c^{\ast },v^{\ast })$
be an optimal couple. Then, the first-order necessary conditions of optimality read, 
\be\label{cond}
\left\{
\beq{rl}
i)&  c^*(t) = \cL[f_1,c_-,c_0](E_1(t))\   \hbox{a.e. in}\ [0, T],\\\\
ii)&\dis v^*(t)=0,\ \hbox{a.e. in}\ [0,T_1],\ \ v^*(t) = \cL[f_2,0,\delta](E_2(t))\ \hbox{a.e. in}\ [T_1, T],
\eeq
\right.
\ee
where
\be \label{E} 
\beq{lll}
E_{1}(t) &=&\dis - q(t) s^{\ast }(t) Z^*(t), \\\\
E_{2}(t) &=& p(t)-A_{21},  
\eeq
\ee
and $(p,q)$ is the solution to the backward dual system \eqref{dual} corresponding to $(c^{\ast },v^{\ast })$.
\end{theorem}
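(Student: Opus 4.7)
I would compute the Gateaux derivative of $\Phi$ at the optimal pair $(c^{\ast},v^{\ast})$ along admissible directions $(\varsigma,\omega)=(\widetilde{c}-c^{\ast},\widetilde{v}-v^{\ast})$, eliminate the state variations $(z,D)$ using the adjoint system \eqref{dual}, and then appeal to Lemma \ref{lemma} to pass from the resulting variational inequality to the explicit pointwise formulas \eqref{cond}. By Proposition \ref{pro1} the limits $(z,D)$ exist in $C([0,T])$, so differentiation under the integral sign in \eqref{nuovacost} yields
\[
\frac{d\Phi}{d\lambda}\Big|_{\lambda=0}=A_{0}\!\int_{0}^{T}\!\!f_{\mathrm{var}}(t)\,dt+A_{0}\!\int_{0}^{T}\!\!c^{\ast}(t)s^{\ast}(t)D(t)\,dt+A_{1}\!\int_{0}^{T}\!\!Q'(c^{\ast}(t))\varsigma(t)\,dt+\int_{0}^{T}\!\![A_{21}+A_{22}v^{\ast}(t)]\omega(t)\,dt,
\]
with $f_{\mathrm{var}}$ as in \eqref{fvar}.

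The pivotal step is to eliminate $D$ by applying \eqref{dualeq} to the test function $h(t):=c^{\ast}(t)s^{\ast}(t)q(t)$: using (\ref{dual}, $ii$) one verifies that $h(t)-\int_{t}^{T}\mathcal{K}^{\ast}(t,x-t)h(x)\,dx=A_{0}c^{\ast}(t)s^{\ast}(t)$, so the left-hand side of \eqref{dualeq} reduces to the $D$-integral above, while a second use of (\ref{dual}, $ii$) simplifies its right-hand side to $\int_{0}^{T}f_{\mathrm{var}}(t)q(t)\,dt-A_{0}\int_{0}^{T}f_{\mathrm{var}}(t)\,dt$. Plugging this identity back, the two $A_{0}\!\int\!f_{\mathrm{var}}$ terms cancel. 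To eliminate $z$ I would multiply (\ref{dual}, $i$) by $z(t)$, integrate over $[0,T]$, integrate by parts using $p(T)=0$ and $z(0)=0$, and insert $z'(t)=-\omega(t)-\delta z(t)$ from (\ref{syst-var}, $i$); the $\delta$-contributions cancel and I obtain $\int_{0}^{T}q(t)c^{\ast}(t)Z^{\ast}(t)z(t)\,dt=-\int_{0}^{T}p(t)\omega(t)\,dt$.

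Collecting these identities and inserting \eqref{E}, the optimality condition $\frac{d\Phi}{d\lambda}|_{\lambda=0}\ge 0$ becomes the variational inequality
\[
\int_{0}^{T}\![A_{1}Q'(c^{\ast}(t))-E_{1}(t)](\widetilde{c}(t)-c^{\ast}(t))\,dt+\int_{T_{1}}^{T}\![A_{22}v^{\ast}(t)-E_{2}(t)](\widetilde{v}(t)-v^{\ast}(t))\,dt\ge 0
\]
for every $(\widetilde{c},\widetilde{v})\in\mathcal{U}$, where the restriction of the $v$-integral to $[T_{1},T]$ uses $\omega\equiv 0$ on $[0,T_{1}]$ from \eqref{nuovo}. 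A standard Lebesgue-point argument makes these inequalities equivalent to the pointwise inclusions $E_{1}(t)\in f_{1}(c^{\ast}(t))+N_{[c_{-},c_{0}]}(c^{\ast}(t))$ a.e.\ on $[0,T]$ and $E_{2}(t)\in f_{2}(v^{\ast}(t))+N_{[0,\delta]}(v^{\ast}(t))$ a.e.\ on $[T_{1},T]$. Since \eqref{assu-Q} makes $f_{1}=A_{1}Q'$ strictly increasing and $f_{2}(x)=A_{22}x$ trivially is, Lemma \ref{lemma} inverts both inclusions and yields \eqref{cond}. The main obstacle is the Volterra-duality step: the factor $c^{\ast}(t)s^{\ast}(t)$ inside $\mathcal{K}^{\ast}(t,x-t)$ in \eqref{dualeq} is evaluated at the outer time $t$, whereas in (\ref{dual}, $ii$) the analogous factor sits at the running integration variable, so the non-obvious choice $h=c^{\ast}s^{\ast}q$ is what reconciles the two formulations and produces the clean cancellation of the $A_{0}\!\int\!f_{\mathrm{var}}$ terms above.
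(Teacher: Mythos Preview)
Your proposal is correct and follows essentially the same route as the paper: compute the Gateaux derivative, eliminate $D$ via the duality identity \eqref{dualeq} with the choice $h=c^{\ast}s^{\ast}q$ together with (\ref{dual},\,$ii$), eliminate $z$ by pairing (\ref{syst-var},\,$i$) with (\ref{dual},\,$i$) through integration by parts, and then invoke Lemma~\ref{lemma}. The only cosmetic differences are that the paper multiplies (\ref{syst-var},\,$i$) by $p$ rather than (\ref{dual},\,$i$) by $z$ (these are symmetric and yield the same identity), and it passes from the integral variational inequality to the pointwise inclusion directly via the normal-cone characterization in Lemma~\ref{lemma} rather than a Lebesgue-point argument.
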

\begin{proof}
We start performing the variations  \eqref{variazioni}, for which we have
\begin{equation} \label{cond-opt-lam}
\Phi (c^{\ast },v^{\ast })\leq \Phi (c^{\lambda},v^{\l}), 
\end{equation}
so that (see \eqref{state-var})
\be\label{cal-3}
\beq{l}
\dis \lim_{\lambda \rightarrow 0}\frac{\Phi (c^{\lambda
},v^{\l})-\Phi (c^{\ast },v^{\ast })}{\lambda } \\\\
\dis\hskip 1.5cm =\int_{0}^{T}A_{0}s^{\ast }(t) Z^*(t)  \varsigma (t) dt 
+\int_{0}^{T}A_{0}c^{\ast }(t)Z^*(t) z(t) dt\\\\
\dis\hskip2.5cm +  \int_{0}^{T}A_{0}c^{\ast }(t)s^{\ast }(t)D(t) dt  +A_{1}\int_{0}^{T}Q^{\prime }(c^{\ast }(t))\varsigma
(t)dt \\\\ \dis \hskip5.5cm+  \int_{0}^{T}(A_{21}+A_{22}v^{\ast }(t))\omega (t)dt\ \geq 0.
\eeq
\ee
In order to get rid of $z$ and $y$ we first multiply (\ref{syst-var}, $i$) by $p$ and integrate over $[0,T]$ obtaining
\be\label{cal-1}
\beq{ll}
- \dis\int_{0}^{T} \omega(t) p(t) dt &=\dis \int_{0}^{T}\left[ z^{\prime }(t)+\delta z(t)\right] p(t) dt\\\\
&=\dis\int_{0}^{T}\left[ -p^{\prime }(t)+\delta p(t) \right] z(t)dt\\\\
& = \dis \int_{0}^{T} q(t) c^{\ast}(t)Z^*(t) z(t)dt
\eeq
\ee
where we have integrated by parts using the initial conditions for $z$ and the final  condition for $p$ and then used equation (\ref{dual} , $i$). Beside we consider equation (\ref{syst-var}, $ii$)  and the consequent equality \eqref{dualeq}, choosing  $h(t)= c^*(t) s^*(t) q(t)$, where  $q$ is the solution of  (\ref{dual} , $ii$). We obtain 
\[
\beq{l}
\dis \int_0^T c^*(t) s^*(t) \left[  q(t) - \int_t^T \cK^*(\s, \s-t) q(\s) d\s \right] D(t) dt \\\\
\dis\hskip2cm = \int_0^T  f_{var} (t)  \int_t^T \b_0(\s - t)\Gamma(\s-t)c^*(\s) s^*(\s) q(\s) d\s dt\\\\
\dis\hskip2cm = \int_0^T  f_{var} (t)  \int_t^T \cK^*(\s , \s -t)q(\s) d\s dt .
\eeq
\]
then, using (\ref{dual} , $ii$) we conclude with (see \eqref{fvar})
\be\label{cal-2}
\int_{0}^{T}A_{0}c^{\ast }(t)s^{\ast }(t)D(t) dt  =  \int_0^T \left( c^{\ast }(t)z(t)+\varsigma (t)s^{\ast }(t)\right) Z^*(t) (q(t) - A_0) dt .
\ee
Putting together \eqref{cal-1} and \eqref{cal-2} we finally obtain
\[
\beq{l}
\dis \int_{0}^{T}A_{0}c^{\ast }(t)Z^*(t)z(t) dt+ \int_{0}^{T}A_{0}  c^{\ast }(t)s^{\ast }(t)D(t) dt  \\\\
\dis \hskip1cm=-\int_{0}^{T}\omega (t)p(t)dt+\int_{0}^{T}\varsigma(t)s^{\ast }(t)Z^*(t) (q(t)-A_0) dt. 
\eeq
\]
Thus, substituting into \eqref{cal-3} we have
\[
\beq{l}
\dis- \int_{0}^{T}\omega (t)p(t)+\int_0^T \varsigma(t)s^{\ast }(t)Z^*(t) q(t) dt \\
\dis \hskip2cm+A_{1}\int_{0}^{T}Q^{\prime }(c^{\ast }(t))\varsigma
(t)dt+\int_{0}^{T}(A_{21}+A_{22}v^{\ast }(t))\omega (t)dt\geq 0
\eeq
\]
and, using \eqref{variazioni} and \eqref{nuovo} we conclude that
\be\label{cal-4}
\beq{l}
\dis\int_{0}^{T} \left[ - s^{\ast}(t)Z^*(t) q(t) -A_{1}Q^{\prime }(c^{\ast }(t))\right] (c^*(t)-\widetilde c (t)) dt  \\\\
\dis \hskip3cm+\int_{T_1}^{T}\left[ p(t)-A_{21}-A_{22}v^{\ast }(t)\right] (v^*(t)-\widetilde v (t))  dt \ \geq \ 0,
\eeq
\ee
for all $\widetilde c$, $\widetilde v$ such that
\[
c_- \le \widetilde c(t) \le c_0, \ \ \hbox{a.e. in}\ \ [0, T],\quad 0 \le \widetilde v(t) \le \delta, \ \ \hbox{a.e. in}\ \ [T_1, T] .
\]
In particular, for $\widetilde{v}=v^{\ast }$, we deduce that 
\begin{equation} \label{cal-5-0}
 E_1(\cdot)-A_{1}Q^{\prime }(c^{\ast }(\cdot))\in \cN_{[c_-,c_{0}]}(c^{\ast})
\end{equation}
where $\cN_{[c_{-},c_{0}]}(c^{\ast })\subset L^2(0,T)$ is the normal
cone to the convex set
\[
C_{[c_{-},c_{0}]}\equiv \left\{ u \in L^2(0,T) ;\  c_- \le u(t) \le c_0,\ \hbox{ a.e. in}\ \  [0,T] \right\}
\] 
at $c^{\ast }$. This can be still written 
\begin{equation} \label{cal-6}
E_1(t) \in f_1(c^*(t))+
N_{[c_{-},c_{0}]}(c^{\ast}(t)),\quad \hbox{a.e.  in}\ \ [0 , T]
\end{equation}
and (\ref{cond}, $i$) follows from Lemma \ref{lemma}.

Analogously, going back to (\ref{cal-4}) and setting $\widetilde{c}=c^{\ast }$ we obtain 
\begin{equation} \label{cal-11}
p(\cdot)-A_{21}-A_{22}v^{\ast }(\cdot)\in \cN_{[0,1]}(v^{\ast }), 
\end{equation}
or still
\[
E_{2}(t)\in   f_2(v^*(t))+ N_{[0,\delta]}(v^{\ast }(t)), \quad \hbox{a.e.  in}\ \ [T_1 , T]
\]
and (\ref{cond}, $ii$) is proved.
\end{proof}

%%%%%%%%%%%%%%%%%%%%%%%%%%%%%%%%%%%%%
\section{A worked example to discuss optimality conditions}
%%%%%%%%%%%%%%%%%%%%%%%%%%%%%%%%%%%%%%
The optimality conditions \eqref{cond}, stated in Theorem \ref{optimaltheorem}, provide  a functional system of equations to determine the shape  of the optimal strategy $(c^*,v^*)$. Indeed, these equations can be used in connection with a numerical method to determine the possible strategies under realistic values of the parameters. However, some informations can be obtained analysing the behavior of the functions $E_1(t)$ and $E_2(t)$,  as far as possible independently of the specific form of the seeked  functions $c^*(t)$ and $v^*(t)$. A specific worked example may help to show how this analysis can be carried through, at least indicating some possible scenarios to be fully clarified by numerical methods.

To this aim we consider the case $x^+ = +\infty$ with the following parametrization
\[
\b_0(x) = \bar\b e^{-\varphi x},\quad \g(x) = \g
\]
where $\varphi$ and $\gamma$ are positive constants. The former assumption implies that infectiousness is maximal during the earlier stage of infectivity development and then declines exponentially. Moreover, as initial density of infected we take the corresponding stable age distribution of free epidemic (see \eqref{stabledistribution}) 
\[
Y_0(x) = I_0 (\a^*_0 + \g) e^{-(\alpha^*_0 + \gamma)x}
\]
With these assumptions, setting $\theta= \varphi + \gamma$,  the state equation \eqref{B} reads
\[
Z(t) = \bar\b  e^{-\theta t} \left(\int_0^t c(x)s(x) e^{\theta x} Z(x) dx +  I_0 (\a^*_0 + \g) \int_0^\infty e^{-(\theta+\a_0^*)x}\right) ,
\]
which is equivalent to the  non-autonomous ordinary equation,
\[
Z'(t)=\bar\b c(t)s(t) Z(t) -\theta Z(t),\quad Z(0) = \bar\b I_0 \frac{\g +\alpha^*_0}{\theta+\alpha^*_0} ,
\]
The previous equation is the infective equation of a unstructured (SIR or SIRS) model where infectiousness decline supplies an additional source of removal, besides the baseline removal rate $\gamma$, generating the overall removal rate $\theta = \varphi + \gamma$. 

Based on previous assumptions
\be\label{Z}
Z^*(t) = Z(0) e^{-\theta t} e^{\bar\b \int_0^t c^*(\s)s^*(\s) d\s}.
\ee
Concerning the dual variable $q(t)$, a similar computation yields the (backward) ODE problem
\[
q'(t) = \left( \theta - \bar \b c^*(t) s^*(t)\right) q(t) - \theta A_0 ,\quad q(T)= A_0 ,
\]
so that
\be\label{q}
q(t) = A_0 e^{\theta t}\left( e^{-\theta T} e^{\bar\b \int_t^T c^*(r)s^*(r) dr} + \theta\int_t^Te^{-\theta \s}e^{-\bar\b \int_\s^tc^*(r)s^*(r) dr} d\s \right) .
\ee
Actually, we are interested to the product $\Pi(t)= q(t) Z^*(t)$ that reads
\[
\Pi(t)= A_0Z(0) \Pi_0(t)
\]
where
\be\label{Pi}
\Pi_0(t)=  \theta \int_t^T e^{-\theta \s} e^{\bar\b \int_0^\s c^*(r) s^*(r) dr} d\s+  e^{-\theta T} e^{\bar\b \int_0^T c^*(r)s^*(r) dr} ,
\ee
is a positive, decreasing function. Finally we have
\[
\left\{
\beq{l}
\dis E_1(t)= - \Pi(t) s^*(t) , \\\\
\dis E_2(t)= A_0 Z(0)\ e^{\d t} \int_t^T e^{-\d\s}\Pi_0(\s) c^*(\s) d\s -A_{21} .
\eeq
\right.
\]
Note that $E_1(t)$  is negative and increasing as far as $s^*(t)$ is non increasing. Indeed, the behavior of  $E_1(t)$ is driven by $s^*(t)$ which is definitely non increasing for $t\in[0,T^*]$, with $T^* > T_1$. In fact,  $s^*(t)=1$ for $t\in[0,T_1]$ because $v^*(t)=0$ (see \eqref{s}), then it is non increasing in a right neighborhood of $T_1$, but may be oscillating for $t\in[T^*, T]$, according with the variations of $v^*(t)$. Moreover from \eqref{Pi} we can observe that
\[
\beq{ll}
\dis E_1(0)= - A_0 Z(0) \Pi_0(0)&< - A_0 Z(0) e^{-\theta T} e^{\bar\b \int_0^T c^*(r)s^*(r) dr} \\\\
\dis &\le - A_0 Z(0) e^{-\theta T} e^{\bar\b \int_0^T c^*(r)s^*(r) dr} s^*(T) = E_1(T) .
\eeq
\]
Also the behavior of $E_2(t)$ is largely unknown since the first term is the product of an increasing term and a decreasing one. Nonetheless, it holds that
\[
E_2(0) > - A_{21} = E_2(T), \quad E_2'(T) = - \Pi_0(T) c^*(T) < 0 .
\]
\begin{figure}[h]
\includegraphics[width=6.5cm]{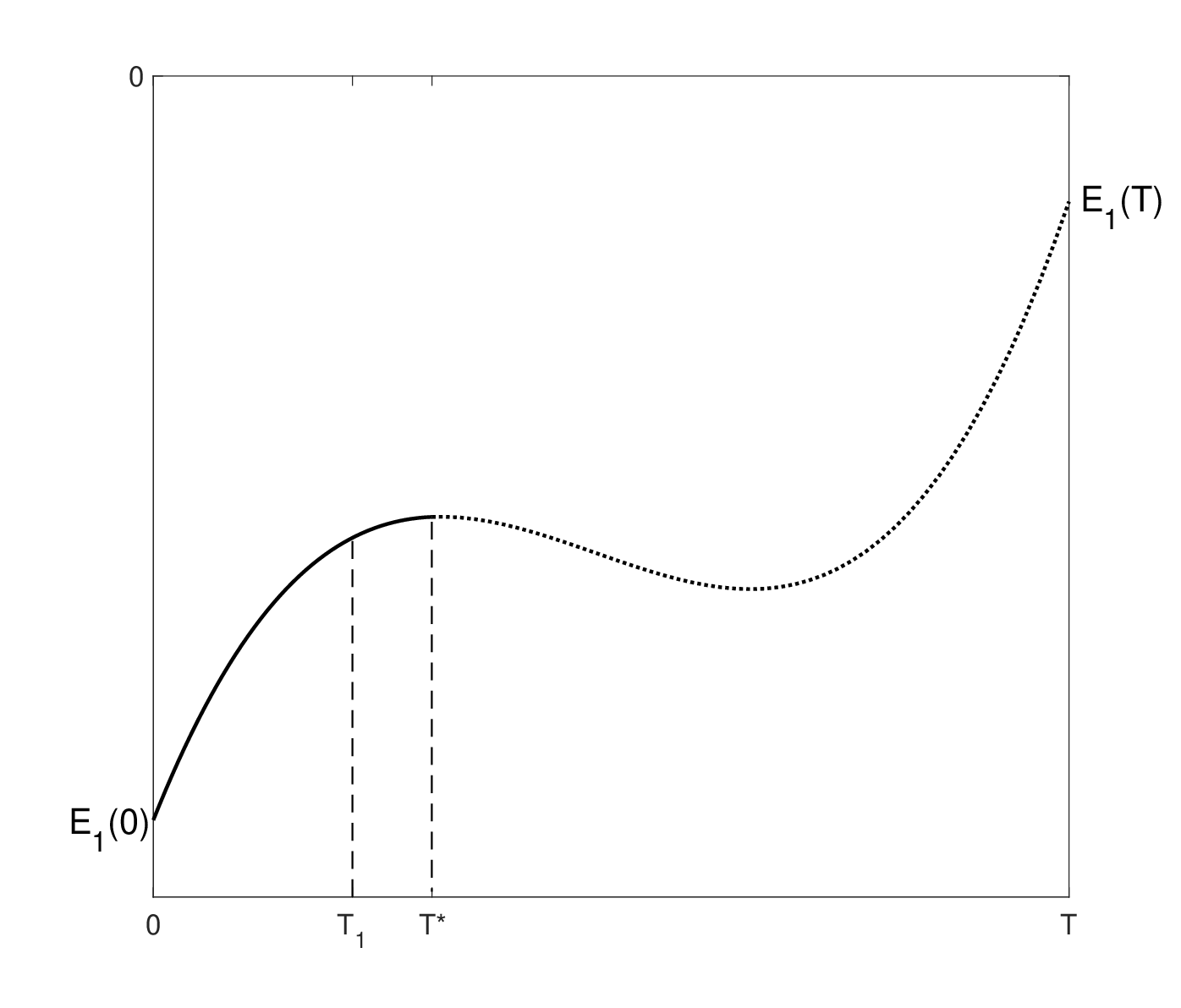}\includegraphics[width=6.5cm]{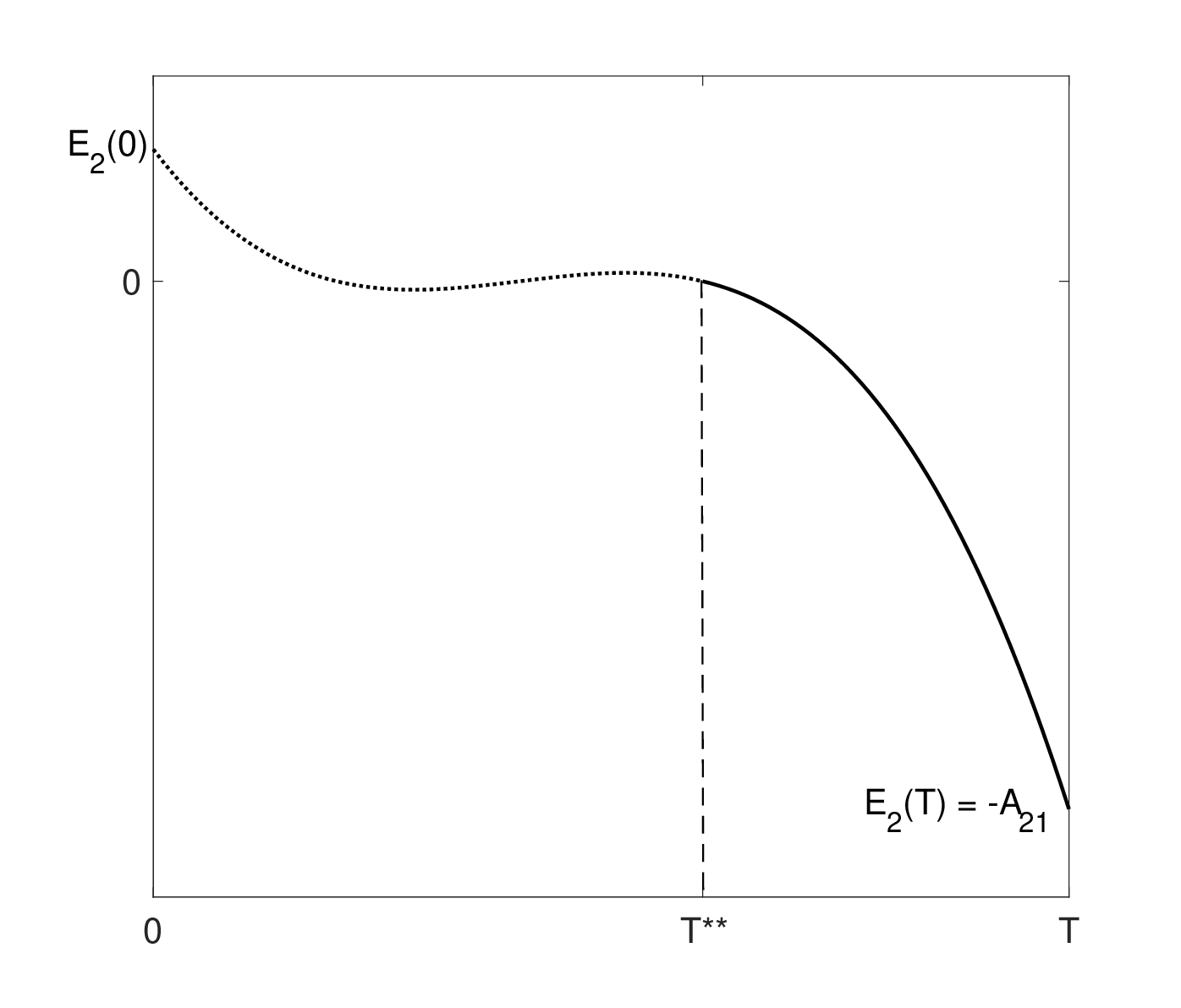}
\caption{\label{figura2}Possible behavior of $E_1(t)$ (left panel) and $E_2(t)$ (right panel).The solid line corresponds to analytically determined behavior, the dotted lines show only one possible behavior.}
\end{figure}
By collecting together the various information drawn so far, we obtain the graphs reported in Figure \ref{figura2}, where the dotted part of the curves show only one of the possible behaviors of functions $E_1(t)$ and $E_2(t)$. Nevertheless, in spite of the basically unknown behavior of functions $E_1(t)$and $E_2(t)$ we can draw some useful information on $c^*(t)$ and $v^*(t)$, based on  \eqref{cond}. In fact, function $c^*(t)$,  (\ref{cond}, $i$) may be restated as follows 
\be\label{opt1}
c^*(t)=\left\{
\beq{ll}
c_- &\hbox{if}\ \ E_1(t) <  A_1 Q'(c_-) ,\\\\
\dis (Q')^{-1} \left(\frac 1{A_{1}}E_1(t)\right)&\hbox{if}\ \ A_1 Q'(c_-) \le E_1(t) \le  A_1 Q'(c_0) ,\\\\
c_0&\hbox{if}\ \ A_1 Q'(c_0) < E_1(t) ,
\eeq
\right.
\ee
showing that $E_1(t)$ must be compared with the values
\[
h_1= A_1Q'(c_-), \quad h_2= A_1Q'(c_0), \qquad ( h_1 < h_2 \le 0) .
\]
This shows we can obtain informations about $c^*(t)$ in the initial and in the final portion of the interval $[0,T]$. Namely, we have
\be\label{bohc0}
\left\{
\beq{rl}
i)&\dis \hbox{If}\ \ E_1(0) <  h_1, \ \ \hbox{then} \quad c^*(t) = c_- \ \ \hbox{for}\ \ t\in [0, T^*]\\\\
ii)&\dis \hbox{If}\ \ E_1(0) >  h_2, \ \ \hbox{then} \quad c^*(t) = c_0 \ \ \hbox{for}\ \ t\in [0, T^*]
\eeq
\right.
\ee
and, similarly
\be\label{bohcT}
\left\{
\beq{rl}
i)&\dis \hbox{If}\ \ E_1(T) <  h_1, \ \ \hbox{then} \quad c^*(t) = c_- \ \ \hbox{for}\ \ t\in [T^{**}, T]\\\\
ii)&\dis \hbox{If}\ \ E_1(T) >  h_2, \ \ \hbox{then} \quad c^*(t) = c_0 \ \ \hbox{for}\ \ t\in [T^{**}, T]
\eeq
\right.
\ee
Note that, since
\[
E_1(0) = -  A_0 Z(0) \Pi_0(0)  < -  A_0 \bar\b I_0 \frac{\g +\alpha^*_0}{\theta+\alpha^*_0},
\]
then condition in (\ref{bohc0}, $i$) holds  if
\[
A_0 \bar\b I_0 \frac{\g +\alpha^*_0}{\theta+\alpha^*_0} >  -A_1Q'(c_-).
\]
The previous condition is satisfied if, for instance, $A_0$ is sufficiently larger than $A_1$, 
 i.e. when the weights attributed to the direct health cost of the epidemics is larger than the corresponding indirect loss. Indeed, in this case the optimal strategy corresponds to set the contact rate $c(t)$ to its minimum value $c_-$ in the early phase of epidemic intervention.  Conversely, similar estimates show that if $A_1 \gg A_0$  then (\ref{bohc0}, $ii$) holds. Similar considerations hold for  \eqref{bohcT}.

Finally, concerning function $v^*(t)$, we can draw some useful information by reformulating (\ref{cond}, $ii$)
as
\be\label{opt2}
v^*(t)=\left\{
\beq{ll}
0 &\hbox{if}\ \ E_2(t) <  0 ,\\\\
\dis\frac1{A_{22}}  E_2(t)&\hbox{if}\ \ 0 \le E_2(t) \le  A_{22} \delta ,\\\\
\delta &\hbox{if}\ \ A_{22} \delta< E_2(t) .
\eeq
\right. 
\ee
Then, for $v^*(t)$ in the interval $[T_1, T]$ we have
\be\label{bohv0}
\left\{
\beq{rl}
i)&\dis \hbox{If}\ \ E_2(0) <  0, \ \ \hbox{then} \quad v^*(t) = 0 \ \ \hbox{for}\ \ t\in [T_1, T^*]\\\\
ii)&\dis \hbox{If}\ \ E_2(0) >  A_{22} \delta, \ \ \hbox{then} \quad v^*(t) = \d \ \ \hbox{for}\ \ t\in [T_1, T^*]
\eeq
\right.
\ee
and
\be\label{bohvT}
\left\{
\beq{rl}
i)&\dis \hbox{If}\ \ E_2(T) <  0, \ \ \hbox{then} \quad v^*(t) = 0 \ \ \hbox{for}\ \ t\in [T^{**}, T]\\\\
ii)&\dis \hbox{If}\ \ E_2(T) >  A_{22} \delta, \ \ \hbox{then} \quad v^*(t) = \d \ \ \hbox{for}\ \ t\in [T^{**}, T] .
\eeq
\right.
\ee
Note that since  
\[
E_2(0) \le   A_0 Z(0) c_0 e^{\bar\b c_0 T} T - A_{21}
\]
then condition (\ref{bohv0}, $i$) is  satisfied if
\[
A_0 Z(0) c_0 e^{\bar\b c_0 T} T <   A_{21} .
\]
The latter condition is interesting because it pinpoints the critical role played, in this case, by the relative (baseline) cost of the vaccination campaign compared to the health cost of the epidemic, given by the ratio $A_{21} /  A_{0}$.  Analogously, specific estimates show how conditions (\ref{bohv0}, $ii$), (\ref{bohvT}, $i$), (\ref{bohvT}, $ii$) can be satisfied by different combinations of the parameters.
 
%%%%%%%%%%%%%%%%%%%%%%%%%%%%%%%%%%%%%%%%%%%%%%%%%%%%%%%%
\section{Some remarks}\label{remarks}
%%%%%%%%%%%%%%%%%%%%%%%%%%%%%%%%%%%%%%%%%%%%%%%%%%%%%%%%
In this section we collect a number of remarks on the proposed results that are useful for both the interpretation of their meaning as well as the discussion of the many special subcases that can extracted from our general framework.

\subsection{Epidemiological framework: special subcases}
The epidemiological models considered in this work assumed a fairly general structure (SIRS). From this standpoint, our results can be easily specialised to include the noteworthy subcase of infections imparting full immunity after infection or vaccination by a perfect vaccine that is, the SIR case) which arises for $\delta=0$.\\
Additionally, further noteworthy modelling subcases arise by making suitable hypotheses on epidemiological parameters in order to allow the reduction of our general problem to more treatable ones. For example, by suitable hypotheses on the infectivity kernel and recovery functions $\beta_0(x), \gamma(x)$, the state equation of prevalence can be reduced to either fixed-delay differential equations (DDE) or even ordinary differential equations (ODE), allowing a greater deal of tractability \cite{macdonald,d2021dynamics}.

\subsection{Vaccination schedule: special subcases}
The approach proposed here is highly general and therefore allows to tackle many possible subcases. In relation to COVID-19, and in general to pandemic control, or other urgency situations due to a new virus, a critically important case is that the vaccination control, besides becoming available in a later stage of the epidemic, was subject to a number of constraints that prevent to use it as an optimizable resource. For COVID-19 this has been rule even in industrialised countries mainly due to \emph{supply side constraints}, such as limited availability, bottlenecks in the distribution and administration chains, etc but also to constraints on the \emph{demand side }, such as - besides vaccine hesitancy - the need to giving full priority to population groups at high risk of serious sequelae, or to specific categories, such as medical doctors and nurses, etc). For example, in the case of COVID-19, the best that could be obtained was to achieve, in a span of months, a maximal number of daily vaccine administrations, given constraints on logistics (spaces) and vaccinating staff.\\
For the aforementioned situations, function $V(t)$ is to be taken as given over the time horizon considered, so that the optimization issue re-collapses into a one-control problem, where social distancing is the only optimizable resource and the arrival of the vaccine simply allows to re-optimise given the (known or expected) temporal profile $\overline{V}(t)$ of the schedule of vaccination administration.
For this case our general results may be used to consider the case in which $v(t)$ is assigned and the set of controls includes only possible strategies for $c(t)$. Indeed, in this case only condition \eqref{opt1} is active and allows to determine the optimal $c^*(t)$.\\
As for the modelling of the vaccination schedule, an alternative parametrization avoiding the violation of the positivity of $S(t)$, is $V(t)=\sigma(S(t)) S(t)$ where $\sigma(S(t))$ is a time-dependent, possibly nonlinear, per-capita rate of successful immunization. The latter formulation has the advantage of being amenable to a variety of behavioral formulations \cite{manfredi2013modeling}.

\subsection{Realistic vaccine waning}
Equation (\ref{modelreduced}, $i$), including immunization by a vaccine with waning immunity, assumes the simplest form of waning immunity i.e., exponential waning a constant waning rate $\delta$. This is acceptable as a departure point in the absence of detailed informations on the building of vaccine-related immunity and the subsequent waning process. A more structured description of such a process is easily embedded in the present framework by resorting again to a structured equation using an additional variable  $\a$  to represent the time  elapsed since the instant of vaccine administration. This implies to consider a density function of vaccinated individuals as 
\[
\cI(\a, t),\quad \a \in [0 , \a^+],\ t\in [0,T]
\] 
In this way,  equation  (\ref{modelreduced}, $i$) would be replaced by
\[
\left\{
\beq{rl}
i)&\dis s'(t) = -v(t) + \int_0^{\a^+} \d(\a) \cI(\a , t) d \a\\\\
ii)&\dis \left(\dpar t{} + \dpar x{}\right) \cI(\a,t) = -\delta(\a)\cI(\a,t),\\\\
iii)&\dis \cI(0,t)=  v(t) .
\eeq
\right.
\]

This realistic representation adds a layer of complexity to our but still remaining within the domain of structured epidemic equations, which could still allow a unified treatment. 

\section{Concluding remarks}
This paper has considered the problem of optimally controlling an epidemic of an infectious disease not yielding permament immunity (i.e., of an SIRS type) and structured by time since infection. The main novelty of this work is represented by the consideration of two control tools, namely the enaction of social distancing and vaccination, respectively. The problem was tackled by a fully general theoretical approach by which we could (i) prove the existence of (at least) one optimal control pair, (ii) derived the first-order necessary conditions for optimality, (iii) demonstrate some useful properties of the optimal solutions. The generality of the proposed approach and findings allows, in principle, to treat a number of relevant subcases that have emerged during the COVID-19 epidemic such as e.g., (i) social distancing as the only available control tool during a first epoch of the pandemic, (ii) the arrival of a vaccine only in a second stage of the epidemic, (iii) problems of rationing in the vaccine supply, making social distancing the only optimizable instrument during the vaccination epoch. Note that the implementation of the theory developed here to actual problems, will require - in view of the peculiarity of the problem considered - the development of \emph{ad-hoc}
numerical methods, which will be a first task of our subsequent work.

%%%%%%%%%%%%%%%%%%%%%%%%%%%%%%%%%%%%%%%%%%%%%%%%%%%%%%%%%

\end{document}